\newtheorem*{rep@theorem}{\rep@title}
\newcommand{\newreptheorem}[2]{%
\newenvironment{rep#1}[1]{%
 \def\rep@title{#2 \ref{##1}}%
 \begin{rep@theorem}}%
 {\end{rep@theorem}}}
\newtheorem{theorem}{Theorem}
\newtheorem{lemma}{Lemma}
\newtheorem{corollary}{Corollary}
\newtheorem{example}{Example}
\newtheorem{remark}{Remark}
\theoremstyle{definition}
\definecolor{britishracinggreen}{rgb}{0.0, 0.26, 0.15}
\newtheorem{definition}{Definition}
\title{Codes Correcting a Single Long Duplication Error}
\author{%
  \IEEEauthorblockN{\textbf{Daniil Goshkoder}\IEEEauthorrefmark{1}\IEEEauthorrefmark{2},
                    \textbf{Nikita Polyanskii}\IEEEauthorrefmark{3},
                    and
                    \textbf{Ilya Vorobyev}\IEEEauthorrefmark{4}}\\

\IEEEauthorblockA{\IEEEauthorrefmark{1}%
                     Skolkovo Institute of Science and Technology,
                    Moscow, Russia,
                    daniilgoshkoder@mail.ru}\\
  \IEEEauthorblockA{\IEEEauthorrefmark{2}%
                     Lomonosov Moscow State University,
                    Moscow, Russia,
                    daniilgoshkoder@mail.ru}\\ 
  \IEEEauthorblockA{\IEEEauthorrefmark{3}%
                     IOTA Foundation,
                    Berlin, Germany,
                    nikita.polyansky@gmail.com} \\                    
  \IEEEauthorblockA{\IEEEauthorrefmark{4}%
                     Technical University of Munich,
                    Munich, Germany,
                    vorobyev.i.v@yandex.ru} 
                    }
\begin{document}
\maketitle
\begin{abstract}
We consider the problem of constructing a code capable of correcting a single long tandem duplication error of variable length. As the main contribution of this paper, we present a $q$-ary efficiently encodable code of length $n+1$ and redundancy $1$ that can correct a single duplication of length at least $K=4\cdot\lceil \log_q n\rceil +1$. The complexity of encoding is $O(\frac{n^2}{\log n})$ and the complexity of decoding is $O(n)$. We also present a $q$-ary non-efficient code of length $n+1$ correcting single long duplication of length at least $K = \lceil \log_q n\rceil +\phi(n)$, where $\phi(n)\rightarrow{\infty}$ as $n\rightarrow{\infty}$. This code has redundancy less than $1$ for sufficiently large $n$. Moreover, we show that in the class of codes correcting a single long duplication with redundancy $1$, the value $K$ in our constructions  is  order-optimal.
\end{abstract}

\section{Introduction} \label{sec: introduction}

One significant challenge in recent years has been the design of high-capacity, high-density data storage that supports long-term archiving. A promising solution to this problem, linked to biotechnological advancement, is the storage of digital information in synthesized strands of deoxyribonucleic acid (DNA), e.g. see~\cite{ceze2019molecular,bornholt2016dna, church2012next}. In particular, the possibility of storing data within the cells of living organisms is being explored, which introduces a new issue: data in long-term DNA-based storage can be modified over time, as described in~\cite{mundy2004origin}. This means it is necessary to find a mechanism to protect data against possible errors, including point insertions, deletions, substitutions, and tandem duplications. The correction of substitutions, insertions, and deletions has been the subject of extensive studies for many years, e.g. see~\cite{levenshtein1965binary,brakensiek2017efficient, helberg2002multiple, sima2020optimal, macwilliams1977theory, roth2006introduction}, whereas codes correcting tandem duplications have become under investigation in recent years only.

Corrupting by a tandem duplication error means taking a word, duplicating a word's subsequence, and inserting the duplicated subsequence directly after the original subsequence. An example of tandem duplication of length 3 in a DNA sequence $ACGTCA$ is $ACGTC\overline{GTC}A$, where the overlined part is a duplication part. We note that tandem duplication errors form a special type of the so-called burst insertion errors, i.e., corrupting by a $b$-length burst insertion error means taking a word and inserting into the word at some position $b$ arbitrary symbols consequently. Clearly, a burst of insertions is a more general type of error compared to duplications and it is possible to use codes correcting bursts of insertions when dealing with duplications. However, such constructions might be inefficient in terms of redundancy as duplications are rather specific errors.

We distinguish four models of tandem duplication errors: 1) duplications of fixed length $l$, 2) short duplications of length at most $k$, 3)  long duplications of length at least $K$, and 4) duplications of arbitrary length. The focus of  this paper is on the third model. Specifically, we study codes capable of correcting a single long tandem duplication. 

\subsection{Related Work}
\subsubsection{Duplications of fixed length}
 
In~\cite{lenz2019duplication}, the equivalence of codes correcting $t$ tandem duplications of length $l$ and $t$ tandem deletions of length $l$ is proved. 
The problem of correcting a single duplication is considered in~\cite{tang2020single,tang2021error,lenz2017bounds}. In~\cite{lenz2018bounds,kovavcevic2018asymptotically}, the authors presented constructions of codes correcting $t$ tandem duplications of fixed length $l$ and proved some bounds on the redundancy of such codes. From those results, it follows that the redundancy of optimal codes correcting $t$ duplications of length $l$ is $t\log n(1+o_{l,t}(1))$.

It is worth noting the problem of correcting an arbitrary number (up to infinity) of duplications of fixed length~\cite{yehezkeally2019reconstruction, leupold2005uniformly, jain2017noise} and variable length~\cite{jain2017duplication, zeraatpisheh2019construction}. Unlike the case of a fixed number of duplications, optimal codes for this problem achieve the asymptotic rate that is strictly smaller than $1$. Note that it does not follow from the considered works on duplications of fixed length $l$ that the codes described in them can correct duplications of length greater than $l$.

\subsubsection{Duplications of short length} Codes correcting a single short duplication of length at most $k$ are discussed in~\cite{nazirkhanova2020codes}. The redundancy of the construction presented in~\cite{nazirkhanova2020codes} is $k\log n(1+o_k(1))$. 

Recall that codes correcting a single burst insertion error can be used for correcting a single tandem duplication.  This approach for constructing duplication codes is shown to be efficient for short-length errors. In~\cite{schoeny2017codes}, the authors proved the equivalence of a code correcting a burst of insertions of length not exceeding $k$ and a code correcting a burst of deletions of length not exceeding $k$. Different approaches to construct burst deletion codes are discussed in~\cite{schoeny2017codes,gabrys2017codes,lenz2020optimal, bitar2021optimal}. In~\cite{bitar2021optimal}, the authors constructed an efficiently encodable code that corrects a burst of deletions of variable length at most $k$. The redundancy of that code is $\log n(1+o(1))$ when $k=o(\log n/(\log \log n)^2)$ (for example, $k=\sqrt{\log n}$), i.e., it is asymptotically optimal by~\cite{schoeny2017codes}.
\subsection{Our contribution}
To our best knowledge, no significant results have been obtained regarding codes correcting long duplications. Our contribution is summarized below:
\begin{itemize}[wide, labelwidth=1pt, labelindent=3pt]
    \item we construct a $q$-ary code correcting a single tandem duplication of length at least $K = 4\cdot\left\lceil\log_q n\right\rceil+1$, where the length of codewords is $n+1$ and the redundancy of the code is $1$.
This code construction allows for encoding of an information word of length $n$ into the codeword of length $n+1$ with complexity $O(\frac{n^2}{\log_q n})$, and decoding of a codeword into the information word with complexity $O(n)$;
\item for an arbitrary function $\phi(n)$ such that $\phi(n)\to \infty$ as $n\to \infty$, we also show the existence of a binary (non-efficient) code of sufficiently large length $n$ and redundancy $1$ capable of correcting a duplication of length at least $K=\left\lceil\log_q n\right\rceil+\phi(n)$.
\item we use results by~\cite{lenz2018bounds} and prove that a code of length $n$ correcting a duplication of length $K=\left\lceil\log_q n\right\rceil-c$ for $c>2$ has redundancy more than $1$; thereby, our constructions can correct a single duplication of order-optimal length. 
\end{itemize}

\subsection{Outline}

The paper is organized as follows. In Section~\ref{sec: preliminaries}, we introduce important definitions and notations that are used throughout the paper. We present our code construction and explain all its components in Section~\ref{sec: code construction}. The main result is formulated in Theorem \ref{th: code construction} and its proof is divided into Lemmas \ref{lem: words without duplications}-\ref{lem: operations complexity}. In Section \ref{sec: codes with redundancy 1} we compare our result with the converse bound which follows from~\cite{lenz2018bounds}, and with another code construction that does not allow for efficient encoding. Section \ref{sec: conclusion} concludes the main part of the paper.

\section{Preliminaries} \label{sec: preliminaries}

Let $\mathbb{Z}_q$ be an alphabet consisting of $q$ letters $\{0,1,\dots,q-1\}$ and let $x=(x_1,\dots,x_n)\in\mathbb{Z}_q^n$ be a word of length $n$. 
Denote the length of a word as $|x|$. The $l$-length subword of a word $x$ that starts at position $i$ is denoted as $x_{i+{[l]}}=(x_i,x_{i+1},\ldots,x_{i+l-1})$.

\begin{definition}\label{def: tandem duplication}
A \textit{tandem duplication} of length $l$ at position $i$ with $1\le i\le n-l$ in a word $x=(uvw)$ ($|x|=n$) is defined as $\tau_{i,l}(x)=(uvvw)\in\mathbb{Z}_q^{n+l}$, where $|u|=i$, $|v|=l$ and $|w|=n-l-i$. In $\tau_{i,l}(x)$, the first subword $v$ we call \textit{the left part of the duplication} and the second subword $v$ we call \textit{the right part of the duplication}.
\end{definition}

\begin{example}\label{ex: tandem duplications}
Consider the word $x=(00123312)\in\mathbb{Z}_4^8$. Then for this word examples of duplication errors are presented: $\tau_{2,2}(x)=(0012\overline{12}3312)$ and $\tau_{4,2}(x)=(001233\overline{33}12)$, where the overlined parts are duplication parts. Resulting words $\tau_{2,2}(x)$ and $\tau_{4,2}(x)$ are elements of $\mathbb{Z}_4^{8+2}=\mathbb{Z}_4^{10}$.
\end{example}

Further in this paper, for simplicity, we call tandem duplications simply duplications. With some abuse of terminology, we also call the duplicated part $v$ in a word $(uvvw)$ a duplication.

\begin{definition}\label{def: error ball}
\textit{The error ball of a word x with radius t} is the set of all words that can be reached by at most $t$ duplications, that is, 
$$B_t(x) = \{y|\ y = \tau_{i_s,l_s}(\dots(\tau_{i_1,l_1}(x))\dots), s\le t\}\footnote{The positions $i_1,\ldots,i_s$ are chosen arbitrarily, whereas the lengths of duplications $l_1,\ldots,l_s$ depend on the model of a duplication error}.$$
In the model of long duplication errors with parameter $K$, the lengths of duplications $l_1,\ldots,l_s$ are at least $K$.
\end{definition}

\begin{definition}\label{def: duplication correcting code}
A code $C\subset\mathbb{Z}_q^n$ is called a \textit{t-duplication correcting code}, if $B_t(x)\cap B_t(y)\neq\emptyset$ implies $x=y$ for any $x,y\in C$.
\end{definition}

In this paper, we consider \textit{a code correcting a single long duplication error}, that is, $t=1$ and the length of duplication $l\ge K = 4\cdot\left\lceil\log_q n\right\rceil+1$.

\begin{definition}\label{def: non-matching subword}
    Given a word $y$, a word $x$ ($|x|\le |y|$) is called a non-matching subword for $y$ if $x$ doesn't appear in $y$ as a subword.
\end{definition}

For example, words $001$ and $111$ are non-matching subwords for word $110000$, but $000$ isn't. In further discussions, by  a non-matching subword  for a word of length $n$ we refer to the non-matching subword of length $\left\lceil\log_q n\right\rceil$.

\section{Code construction} \label{sec: code construction}

Our main result is summarized in the following Theorem \ref{th: code construction}.

\begin{theorem}\label{th: code construction}
There exists an efficiently encodable and decodable code that maps an arbitrary word of length $n$ into a word of length $n+1$ and is capable of correcting a single long duplication of length not less than $K= 4\cdot\left\lceil\log_q n\right\rceil+1$. The cardinality of this code is $q^n$ and redundancy is $1$. The encoding complexity of the proposed code is $O(\frac{n^2}{\log_q n})$ and the decoding complexity is $O(n)$.
\end{theorem}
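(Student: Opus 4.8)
The plan is to construct the code explicitly and then verify the three separate claims (correction capability, cardinality/redundancy, complexity) one at a time. The guiding idea is to attach one extra symbol to a length-$n$ information word in a way that pins down a non-matching subword of length $\lceil \log_q n\rceil$ (which exists by a counting argument, since there are $q^{\lceil\log_q n\rceil}\ge n$ candidate words but only $n-\lceil\log_q n\rceil+1 < n$ subwords of that length appear), and to use that non-matching subword as a \emph{marker/anchor} that survives any single long duplication. Concretely, I would encode so that the codeword of length $n+1$ begins (or ends) with a fixed structural pattern built from the non-matching subword, which acts as a synchronization landmark: after a duplication of length $l\ge K = 4\lceil\log_q n\rceil+1$, the anchor is either untouched or is reproduced in a controlled way, and the decoder can locate it, determine where the duplicated block sits, and excise one copy. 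The choice $K = 4\lceil\log_q n\rceil+1$ should be exactly what is needed so that a single duplication of that length cannot simultaneously destroy the anchor and create a spurious copy of it — roughly, the anchor has length $\lceil\log_q n\rceil$, and one needs a constant number of disjoint ``safety margins'' of that length (for the left part, the right part, and the junctions) inside any duplicated block, which is where the factor $4$ comes from.

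First I would fix notation for the encoding map $E:\mathbb{Z}_q^n\to\mathbb{Z}_q^{n+1}$ and describe its image, the code $C$. This amounts to the construction in Section~\ref{sec: code construction}; I would state which bit patterns are reserved and prove $E$ is injective, giving $|C| = q^n$ and redundancy $(n+1)-n = 1$ immediately. Second, for the correction claim I would appeal to the decomposition advertised in the excerpt: Lemma~\ref{lem: words without duplications} and the subsequent lemmas. The logical skeleton is Definition~\ref{def: duplication correcting code}: I must show $B_1(x)\cap B_1(y)\ne\emptyset \Rightarrow x=y$ for $x,y\in C$. I would do this by exhibiting a decoder $D:\mathbb{Z}_q^{n+1}\cup\mathbb{Z}_q^{n+1+l}\to C$ such that $D(\tau_{i,l}(x)) = x$ for every $x\in C$ and every admissible $(i,l)$ with $l\ge K$; existence of such a $D$ is equivalent to the code property. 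The decoder works in two stages: (a) locate the non-matching-subword anchor in the received word and use it to identify a short window in which the duplication junction must lie; (b) within that window, compare the two candidate copies and delete the right part of the duplication, then strip the redundant symbol via $E^{-1}$.

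The main obstacle — and the technical heart of the paper — is stage (a): proving that the anchor is \emph{recoverable} after an arbitrary long duplication. The difficulty is that a duplication of length $l$ can be huge (up to $n-l$), so the duplicated block may swallow the anchor entirely, or the duplication may occur right next to the anchor and create a near-copy that confuses the decoder. The argument must show: (i) a long duplication cannot create a \emph{new} occurrence of the non-matching subword that the decoder would mistake for the anchor — this uses that the subword is non-matching in the \emph{original} codeword, so any new occurrence must straddle the duplication junction, and the margins built into $K$ guarantee the decoder can tell a straddling occurrence from the genuine anchor; and (ii) even if the anchor lies inside the duplicated block, the block is copied verbatim, so \emph{two} shifted copies of the anchor appear at known relative offset, and the decoder can still localize and cancel the duplication. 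Carrying out the case analysis (anchor before / inside / after / overlapping the duplicated region, and left part vs.\ right part) against the $4\lceil\log_q n\rceil+1$ margin is the routine-but-delicate part I would expect to occupy most of the proof. Finally, for complexity: decoding is a single left-to-right scan plus constant-size comparisons, giving $O(n)$; encoding must search for a non-matching subword, which naively costs $O(n)$ candidates each checked in $O(n/\log_q n)$ time (scanning the word in $\lceil\log_q n\rceil$-blocks), yielding $O(n^2/\log_q n)$, matching the stated bound — and I would note this search can be organized so no faster method is claimed but the bound is honest.
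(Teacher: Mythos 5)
Your proposal has a genuine gap at its core: the anchor/synchronization mechanism you describe does not address the actual source of ambiguity, and it is not the mechanism the construction needs. The real difficulty in correcting a single duplication is not locating the duplicated block in the received word (a tandem repeat of length $\ge K$ can be found by scanning), but proving that deleting one copy of \emph{some} tandem repeat always returns the unique transmitted codeword --- i.e., that no two codewords $x\neq y$ satisfy $\tau_{i_1,l}(x)=\tau_{i_2,l}(y)$. A single anchor placed at one end of the word gives no information about a duplication occurring far from it, and your case analysis (anchor before/inside/after the duplicated region) never rules out the scenario where the received word contains two different tandem repeats of length $\ge K$ whose deletions yield two different codewords. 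The paper resolves this with a much stronger structural property: every codeword contains \emph{no} duplication of length $\ge K$ whatsoever, and Lemma~\ref{lem: words without duplications} shows (by a three-case overlap analysis, including the subtle case where the two candidate duplications overlap in more than $l$ positions and one writes $a=a_1a_2$, $b=a_2a_1$) that such words are uniquely recoverable. Your proposal contains no analogue of this lemma, and without it the correction claim does not go through.

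The second missing piece is the encoding map itself. Saying the codeword ``begins or ends with a fixed structural pattern built from the non-matching subword'' does not explain how to guarantee that the information-carrying part of the codeword is free of long duplications --- an arbitrary $x\in\mathbb{Z}_q^n$ may contain many of them. The paper's encoder is an iterative compress-and-record procedure: repeatedly delete the leftmost long duplication and append a data block $(i_j^{(q)},u_j^{(q)},l_j^{(q)},1)$ recording its position and length, padded back to length $n+1$ with non-matching subwords. The non-matching subwords are used there not as decoder anchors but to guarantee the appended data block cannot itself host the right part of a new long duplication (any window of length $4\lceil\log_q n\rceil+1$ inside the block contains a full non-matching subword --- this is where the factor $4$ actually comes from, not from your ``safety margins at the junctions''). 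One must then prove termination (each iteration shortens the information part by $l_j-2\lceil\log_q n\rceil-1>0$) and invertibility (the data blocks can be parsed from the right and the duplications reinserted). Your complexity accounting happens to land on $O(n^2/\log_q n)$, but for the wrong reason: the dominant cost is $O(n/\log_q n)$ iterations each performing an $O(n)$ tandem-repeat search, and the decoder must both excise the duplication and replay the recorded reinsertions, not merely scan once.
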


In what follows, we construct the code 
$$C_{n+1}^f=\{y\in \mathbb{Z}_q^{n+1}|\ \exists x\in \mathbb{Z}_q^n, y=f(x)\},$$ where $f$ is a mapping
$\mathbb{Z}_q^n\rightarrow \mathbb{Z}_q^{n+1}$ from Theorem \ref{th: code construction}.

The proof of Theorem \ref{th: code construction} is divided into several lemmas. The first technical statement, Lemma \ref{lem: words without duplications}, states that in case of the absence of long duplications in codewords, we can unambiguously reconstruct a codeword after corrupting it by one long duplication error. It follows from this lemma that it is sufficient to construct a code containing words of the same length without duplications of length $l\ge K$. Then, if a duplication error occurs, the original codeword can be reconstructed by deleting one duplicated part. The idea of constructing a code correcting a single long duplication is to build an injective mapping $f:\mathbb{Z}_q^n\rightarrow\mathbb{Z}_q^{n+1}$ such that for any $x\in\mathbb{Z}_q^n$ we get a word $f(x)\in\mathbb{Z}_q^{n+1}$, which does not contain long duplications. Next, we describe a procedure that produces the desired map. This procedure is a simple repetition of two steps. We prove its correctness, explain how to decode a codeword into the information word, and determine the complexity of the described encoding and decoding operations in Lemmas \ref{lem: non-matching subword}-\ref{lem: operations complexity}.

\subsection{Idea of a code construction}

\begin{lemma}\label{lem: words without duplications}
Suppose words $x,y\in \mathbb{Z}_q^n$ do not contain duplications of length $l$. Let $x'=\tau_{i_1,l}(x)\in \mathbb{Z}_q^{n+l}, y'=\tau_{i_2,l}(y)\in \mathbb{Z}_q^{n+l}$. If $x'=y'$, then $x=y$.
\end{lemma}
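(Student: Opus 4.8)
The plan is to argue by contradiction: suppose $x \neq y$ but $x' = \tau_{i_1,l}(x) = \tau_{i_2,l}(y) = y'$. Without loss of generality assume $i_1 \le i_2$. Write $x = (u_1 v_1 w_1)$ with $|u_1| = i_1$, $|v_1| = l$, so that $x' = (u_1 v_1 v_1 w_1)$, and similarly $y = (u_2 v_2 w_2)$ with $|u_2| = i_2$, $|v_2| = l$, so that $y' = (u_2 v_2 v_2 w_2)$. The key structural fact I would extract is that $x$ is obtained from $x'$ by deleting the $l$ symbols in positions $i_1+1, \dots, i_1+l$, and $y$ is obtained from $x' = y'$ by deleting the $l$ symbols in positions $i_2+1, \dots, i_2+l$; so $x$ and $y$ are two length-$n$ subsequences of the same word $x'$, each obtained by removing one contiguous block of length $l$, and the two removed blocks are each equal to the block of length $l$ immediately preceding them (that is what "$v_1 v_1$" and "$v_2 v_2$" encode).

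The main step is to show that if $i_1 < i_2$, then $x'$ already contains a duplication of length $l$ inside $x$ (or symmetrically inside $y$), contradicting the hypothesis. The heart of the matter is the overlap region between the two duplicated blocks. Since $x'[i_1+1 .. i_1+l] = x'[i_1+l+1 .. i_1+2l]$ (the left part equals the right part of the first duplication) and $x'[i_2+1 .. i_2+l] = x'[i_2+l+1 .. i_2+2l]$, and these two pairs of windows overlap in a controlled way governed by $d := i_2 - i_1$, I would derive that $x'$ is "$d$-periodic" on a long stretch: a run of the form $x'[j] = x'[j+d]$ for all $j$ in an interval of length at least $l$. Concretely, the equalities force the segment of $x'$ starting around position $i_1$ and of length at least $l + d$ to be periodic with period $d$; restricting to the part of this periodic stretch that survives in $x$ (i.e., lies outside the block deleted to form $x$) still leaves an interval of length $\ge l$ over which $x$ agrees with a shift of itself by $d$, which is exactly a duplication of length $l$ in $x$. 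This contradicts $x$ being duplication-free. Hence $i_1 = i_2$.

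Once $i_1 = i_2 =: i$, the final step is immediate: $x = (x'_1, \dots, x'_i, x'_{i+l+1}, \dots, x'_{n+l})$ and $y = (y'_1, \dots, y'_i, y'_{i+l+1}, \dots, y'_{n+l})$, and since $x' = y'$ these are the same word, so $x = y$.

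I expect the periodicity/overlap bookkeeping in the middle step to be the only real obstacle: one has to be careful about the several cases for how $d = i_2 - i_1$ compares to $l$ (whether the second duplicated block starts inside the first left part, inside the first right part, or beyond it) and to verify in each case that enough of the forced periodic stretch lies within a single copy $x$ (or $y$) to exhibit a genuine length-$l$ duplication there. A clean way to handle this uniformly is to observe that the combined constraints say $x'$ restricted to positions $i_1+1$ through $i_2 + 2l$ has period $d$, an interval of length $2l + d$; since this interval, after deleting the length-$l$ block used to form one of $x$ or $y$, still contains a sub-interval of length $\ge l + d$ that is $d$-periodic, it witnesses a duplication of length $l$ there. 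With that observation the case analysis collapses, and the contradiction follows.
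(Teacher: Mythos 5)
Your overall plan---a case analysis on the offset $d = i_2 - i_1$ between the two duplicated blocks, exhibiting a forbidden duplication inside $x$ or $y$ when the blocks are far apart---matches the paper's proof in two of its three cases. But the central claim of your middle step, that $i_1 < i_2$ always forces a length-$l$ duplication inside $x$ or $y$, and hence that $i_1 = i_2$, is false. Counterexample: $x = y = (2,0,1,0)$ with $l=2$. Then $\tau_{1,2}(x) = \tau_{2,2}(y) = (2,0,1,0,1,0)$, so $x'=y'$ with $i_1 = 1 \neq 2 = i_2$, yet $x$ contains no duplication of length $2$. In general, whenever $0 < d < l$ (i.e.\ the two blocks $[i_1+1,i_1+2l]$ and $[i_2+1,i_2+2l]$ overlap in more than $l$ positions), the two pairs of equalities only force the stretch of $x'$ from position $i_1+1$ to $i_2+2l$ to have the form $a_1a_2a_1a_2a_1$ with $|a_1|=d$, and deleting one length-$l$ block leaves $a_1a_2a_1$, of length $l+d < 2l$, which is not a length-$l$ square. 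Your uniform periodicity argument also overreaches at this point: a $d$-periodic stretch of length $l+d$ witnesses squares of period $d$, not of period $l$, and the hypothesis only excludes duplications of length exactly $l$, so no contradiction is available from it.

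The repair is that in the regime $0 < d < l$ one must prove the conclusion $x=y$ directly rather than by contradiction: with $a = a_1a_2$ and $b = a_2a_1$ as above, both $x'$ and $y'$ contain the common stretch $a_1a_2a_1a_2a_1$ flanked by identical prefixes and suffixes, and removing the leftmost $a_1a_2$ (recovering $x$) or the leftmost $a_2a_1$ (recovering $y$) produces the same word, with $a_1a_2a_1$ in the middle. This is exactly the paper's third case. Your remaining cases are sound: when $d \ge l$ one of the deleted copies is disjoint from the other duplicated block, so that block survives intact in $x$ or $y$ and yields the desired contradiction, and when $i_1 = i_2$ the conclusion is immediate.
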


\begin{proof}
Let $x'=(x_1aax_2), y'=(y_1bby_2)$ and $x'=y'$, where $|a|=|b|=l$. Consider possible cases for the mutual location of subwords $aa$ and $bb$.

1) Consider the first case when the subwords $aa$ and $bb$ do not intersect. Then, when deleting one of the subwords $b$ from $y'$, we get a word $y$ containing a subword $aa$, which contradicts the condition, so this case is impossible.

2) Consider the second case when the subwords $aa$ and $bb$ intersect, and the length of their intersection does not exceed $l$. Then, when deleting the subword $a$, which does not intersect $bb$, from the word $x'$, we get the word $x$, and it contains a subword $bb$, which contradicts the condition, so this case is impossible too.

3) Consider the final case when the subwords $aa$ and $bb$ intersect, and the length of their intersection is greater than $l$. Then we are able to write $a$ and $b$ in the form $a=a_1a_2, b=a_2a_1$. We can assume that subword $aa$ starts to the left of $bb$ (otherwise we can swap $x'$ and $y'$). Hence, both words $x'$ and $y'$ contain a subword of the form $a_1a_2a_1a_2a_1$, while the parts of the words $x',y'$ to the left and to the right of the considered subword match. We can remove the subword $a_1a_2$ (the leftmost) from the word $x'$ and get the word $x$. Similarly, we can remove the subword $a_2a_1$ (the leftmost one) from $y'$ and get the word $y$. With these deletions, subwords we are considering in both words $x$ and $y$ have the same form $a_1a_2a_1$, that is, they match. Parts to the left and to the right also match, which means $x=y$.
\end{proof}

\begin{remark}
Note that Lemma~\ref{lem: words without duplications} is true for duplications of arbitrary length, not just of length at least $K$. However, in this paper, we apply it only to long duplications. 
\end{remark}
Lemma~\ref{lem: words without duplications} is a key ingredient for constructing a code correcting a single long duplication.
\begin{corollary}\label{cor: 1}
A code $C$ consisting of words of a fixed length that do not contain duplications of length at least $K$ can correct a single duplication of length at least $K$. 
\end{corollary}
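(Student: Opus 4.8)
The plan is to derive this directly from Lemma~\ref{lem: words without duplications} together with Definition~\ref{def: duplication correcting code}. Let $n$ be the common length of the codewords in $C$, and fix arbitrary $x,y\in C$ with $B_1(x)\cap B_1(y)\neq\emptyset$, where all duplications involved have length at least $K$; the goal is to show $x=y$. Pick a word $z$ in this intersection. By the definition of the error ball, either $z=x$ or $z=\tau_{i_1,l_1}(x)$ for some position $i_1$ and some $l_1\ge K$, and symmetrically either $z=y$ or $z=\tau_{i_2,l_2}(y)$ for some $i_2$ and $l_2\ge K$.

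First I would dispose of the degenerate cases by a length count, using $|\tau_{i,l}(w)|=|w|+l$. A word obtained from $x$ (resp.\ $y$) by zero duplications has length $n$, whereas one obtained by a single long duplication has length $n+l\ge n+K>n$. Hence if $z=x$ then $z$ has length $n$, which rules out $z=\tau_{i_2,l_2}(y)$ and forces $z=y$, giving $x=y$ at once; the case $z=y$ is symmetric. The only remaining possibility is $z=\tau_{i_1,l_1}(x)=\tau_{i_2,l_2}(y)$, and equating lengths yields $l_1=l_2=:l\ge K$.

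In this case both $x$ and $y$ arise from $z$ by deleting one duplicated block of the \emph{same} length $l$, and by hypothesis neither $x$ nor $y$ contains a duplication of length $l$, since $l\ge K$ and $C$ avoids every duplication of length at least $K$. This is precisely the setting of Lemma~\ref{lem: words without duplications} with $x'=y'=z$, so the lemma gives $x=y$. Therefore $B_1(x)\cap B_1(y)\neq\emptyset$ implies $x=y$ for all $x,y\in C$, which by Definition~\ref{def: duplication correcting code} with $t=1$ (and lengths restricted to $l\ge K$) means exactly that $C$ corrects a single long duplication. There is no real obstacle beyond this bookkeeping: the only step requiring care is the passage from the ``at most one duplication of length $\ge K$'' formulation of $B_1$ to the equal-length, single-duplication statement of Lemma~\ref{lem: words without duplications}, and the length count above reduces one to the other.
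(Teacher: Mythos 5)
Your proof is correct and follows essentially the same route as the paper's: check Definition~\ref{def: duplication correcting code} for $t=1$ and reduce to Lemma~\ref{lem: words without duplications}. The extra bookkeeping you supply --- ruling out the zero-duplication cases by a length count and deducing $l_1=l_2$ so that the lemma's equal-length hypothesis applies --- is left implicit in the paper but is exactly the right justification.
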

\begin{proof}
It is sufficient to check Definition \ref{def: duplication correcting code}. Let $x$, $y\in C$, $|x|=|y|=n$ and let $B_1(x)\cap B_1(y)\neq\emptyset$. So there are words $x'\in B_1(x)$ and $y'\in B_1(y)$ such that $x'=y'$. Then by Lemma~\ref{lem: words without duplications} $x=y$, that is $C$ is the code correcting a single long duplication.
\end{proof}

Note that the absence of duplications in the codeword is not a necessary condition. In the remainder of this section, we show how to encode information words of length $n$ into words of length $n+1$ without duplications of length at least $K=4 \cdot\left\lceil\log_q n\right\rceil+1$.

\subsection{Encoding into words without long duplications}

In this subsection, we construct an encoding mapping, i.e., an injective mapping $f:\mathbb{Z}_q^n\to\mathbb{Z}_q^{n+1}$ such that for any $x\in \mathbb{Z}_q^n$, $f(x)$ does not contain a duplication of length $l$, where $l$ is at least $K=4\cdot\left\lceil\log_q n\right\rceil+1$. Note that a similar approach is considered in \cite{elishco2021repeat}. The difference is that in \cite{elishco2021repeat}, the authors constructed a mapping that modifies an arbitrary word into a $k$-repeat free word, that is, without repeating subwords of length $k$. Among the differences in the algorithms, we can note that in our algorithm we use the idea of adding non-matching subwords to avoid the appearance of new duplications, which is not present in \cite{elishco2021repeat}.

Suppose that $x$ is an information $q$-ary word of length $n$. First, we append the letter $0$ to the end of $x$ and obtain the word $x_0$ of length $n+1$. Then we break our further encoding procedure into a sequence of alternating steps. On odd steps, we apply the mapping $E_1: \mathbb{Z}_q^{n+1}\rightarrow \mathbb{Z}_q^{< n+1}$, where $\mathbb{Z}_q^{< n+1}=\bigcup\limits_{i=1}^{n}\{0,1,\dots,q-1\}^i$. This mapping removes the leftmost duplication from the current word and  adds to the end of the word some data, which can be used for decoding. On even steps, we apply the mapping $E_2:\{0,1,\dots,q-1\}^{< n+1}\rightarrow \{0,1,\dots,q-1\}^{n+1}$, which allows us to keep the length of the resulted word equal to $n+1$ and avoid duplications in the right part of the word  which corresponds to the data used for decoding. Let $E: \{0,1,\dots,q-1\}^{n+1}\rightarrow \{0,1,\dots,q-1\}^{n+1}$ denote a mapping that is a composition of the mappings $E_1$ and $E_2$, that is, $E(x_0)=E_2(E_1(x_0))$. Then we show that by applying the mapping $E$ repeatedly to word $x_0$, we get a word $y$ satisfying the requirements, that is, there are no long duplications in $y$.

\subsubsection{Odd step}

Recall that in the beginning, we append the letter $0$ to the end of the word $x$ and denote it as $x_0$. Suppose that the leftmost long duplication in $x_0$ starts at position $i_1$ and this duplication has length $l_1$. Then we remove it from $x_0$ and append to the end of $x_0$ $q$-ary numbers that represent $i_1$ and $l_1$ in base two. This requires $2\cdot\left\lceil\log_q n\right\rceil$ bits ($\left\lceil\log_q n\right\rceil$ for each number, since $i_1$ and $l_1$ both do not exceed $n$). Denote the corresponding $q$-ary expressions as $i_1^{(q)}$ and $l_1^{(q)}$. We also add the letter $1$ after $l_1^{(q)}$, which indicates that this is the end of the data block about one duplication. At the same time, the length of our word has decreased by $l_1-2\cdot\left\lceil\log_q n\right\rceil-1$, where $l_1$ is the length of the leftmost duplication.
Thus, we have constructed a word $E_1(x_0)$ of length $n+1-(l_1-2\cdot\left\lceil\log_q n\right\rceil-1)$, and this length is less than $n+1$ when $l_1\ge 2\cdot \left\lceil\log_q n\right\rceil +1$. Schematically, the result of this function for a word $x_0$ is 
$$
E_1(x_0)=(D(x_0),\underbrace{i_1^{(q)},l_1^{(q)},1}_{\text{data block}}),
$$
where $D(x_0)$ is a word $x_0$ with removed duplication part.

\subsubsection{Even step} 

Our goal is to construct a mapping $E_2$ that outputs  a word of length $n+1=|x_0|$ such that no extra duplications are created. The length of the word $E_1(x_0)$ equals $n+1-(l_1-2\cdot\left\lceil\log_q n\right\rceil-1)$. Next, we modify the data block appended to the right such that the resulting word has length $n+1$. Define the number 
$$r_1=\left\lfloor \frac{l_1-2\cdot\left\lceil\log_q n\right\rceil-1}{\left\lceil\log_q n\right\rceil}\right\rfloor.$$ 
Then we get the equality $l_1=2\cdot\left\lceil\log_q n\right\rceil+r_1 \cdot\left\lceil\log_qn\right\rceil + 1 + t_1$, where $t_1$ is the remainder of dividing $l_1-1$ by $\left\lceil\log_q n\right\rceil$, $0\le t_1<\left\lceil\log_q n\right\rceil$. For the data block consisting of $i_1^{(q)}$ and $l_1^{(q)}$, we write after $i_1^{(q)}$ a non-matching subword (Definition \ref{def: non-matching subword}) of length $\left\lceil\log_q n\right\rceil$ for the part of the word standing to the left of the end of $i_1^{(q)}$ (including the last letter of $i_1^{(q)}$). 

Next, we write non-matching subwords of length $\left\lceil\log_q n\right\rceil$, but the selection of a non-matching subword is made for the part of the word that is to the left of the end of the last non-matching subword that we added. We repeat this step $r_1-2$ more times. This means that $r_1-1$ non-matching subwords are added altogether up to this moment. Then we add $t_1$ arbitrary letters from $\mathbb{Z}_q$, for example, we write $t_1$ zeros written as $0^{t_1}$. Finally, we add one non-matching subword again, corresponding to the entire part of the word standing to the left of the end of $0^{t_1}$. As a result, the length of the first data block becomes equal to $2\cdot\left\lceil\log_q n\right\rceil+r_1\cdot\left\lceil\log_q n\right\rceil+1+t_1$ which equals $l_1$. Schematically, the result of this function for a word $x_0$ is 
$$
E_2(E_1(x_0))=(D(x_0),\underbrace{i_1^{(q)}, u_1^{(q)}, l_1^{(q)},1}_{\text{data block}}),
$$
where $u_1^{(q)}$ is the part of the added data with $r_1$ non-matching subwords and part $0^{t_1}$, i.e. $u_1^{(q)}=(u_{1,1},\dots, u_{1, r_1-1}, 0^{t_1}, u_{1,r_1})$. For this step to be correct, it is necessary to explain whether there exists at least 1 non-matching subword corresponding to our conditions each time a subword is added. We prove this statement in the next subsection.

\subsubsection{Resulting mapping}  The mapping $E$ is first applied to the information word $x_0$. Then we iteratively apply this mapping to the previous output to obtain a new output until there are no duplications in the final output word. The proof that the termination happens is given in the correctness part of this section. In other words, by repeatedly applying the mapping $E$ to the word $x_0$, the process necessarily ends, and the final output word $y_s=E^s(x_0)$  for some $s$ does not contain any duplications, where $E^s(x_0)=E\circ E\circ\dots\circ E(x_0)$ is the $s^{\text{th}}$ functional power of $E$, that is, $E^s=E\circ E^{s-1}$. We define encoding mapping $f$ as $f(x)=y$ where $y$ is the resulting word $y=y_s$.

\subsection{Correctness}

Correctness of encoding follows from Lemma \ref{lem: non-matching subword} and Lemma \ref{lem: encoding terminates} below. For further discussion, we introduce the notation corresponding to the $j^{\text{th}}$ application of the mapping $E$. Let the duplication for the $j^{\text{th}}$ mapping application begin with the number $i_j$ and has length $l_j$. Similarly, we define the number 
$$r_j=\left\lfloor\frac{l_j-2\cdot\left\lceil\log_q n\right\rceil-1}{\left\lceil\log_q n\right\rceil}\right\rfloor,$$
and therefore $l_j=2\cdot\left\lceil\log_q n\right\rceil+r_j \cdot\left\lceil\log_q n\right\rceil + 1 + t_j$, where $t_j$ is the remainder of dividing $l_j-1$ by $\left\lceil\log_q n\right\rceil$, $0\le t_j<\left\lceil\log_q n\right\rceil$.

\begin{lemma}\label{lem: non-matching subword}
When applying the mapping $E_2$, there exists at least one non-matching subword satisfying our conditions each time a subword is added.
\end{lemma}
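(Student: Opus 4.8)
The plan is to bound the number of distinct length-$\lceil\log_q n\rceil$ subwords that can occur in any prefix of the word under construction, and to compare this against the total number $q^{\lceil\log_q n\rceil}\ge n$ of candidate subwords. Let $m=\lceil\log_q n\rceil$. At the moment we need to insert a non-matching subword of length $m$, the relevant prefix $P$ (the portion of the word to the left of, and including, the current right boundary) has length at most $n+1$, since the full working word always has length $n+1$ and the data block is being written inside it. A word of length $|P|$ contains at most $|P|-m+1\le n+1-m+1\le n$ distinct subwords of length $m$. Hence at least $q^m-n\ge 0$ subwords of length $m$ are non-matching for $P$; I would show this quantity is strictly positive, i.e. $q^m>n$, which holds because $m=\lceil\log_q n\rceil$ forces $q^m\ge q^{\log_q n}=n$, and one checks the edge case $q^m=n$ (when $n$ is an exact power of $q$) separately — there the prefix $P$ actually has length strictly less than $n+1$ at every point where a non-matching subword is appended, because at least one letter (the trailing $0$, or an already-written data symbol) sits outside $P$ on the first insertion, so $|P|-m+1<n=q^m$ still leaves room. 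The cleanest way is to observe $|P|\le n$ always when we are about to pick a non-matching subword: before the very first non-matching subword of a data block, $P$ consists of $D(x_0)$ together with $i_1^{(q)}$, whose combined length is $|E_1(x_0)|-m-1\le n-m$, hence $|P|\le n-m<n$; and each subsequent non-matching subword extends $P$ by exactly $m$ (or by $t_j<m$ for the $0^{t_j}$ block) while the total length stays bounded by $n+1$, so $|P|\le n$ throughout. Then $|P|-m+1\le n-m+1\le n< q^m$ (for $n\ge 2$, say, and after dispatching tiny $n$), leaving at least one non-matching subword available.

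Concretely the key steps, in order, are: (i) fix $m=\lceil\log_q n\rceil$ and record the inequality $q^m\ge n$, noting when it is strict; (ii) argue that at every instant in the even-step procedure at which a length-$m$ non-matching subword is to be selected, the relevant prefix $P$ has length at most $n$ — this uses that the working word has length exactly $n+1$ and that the boundary after which we write never reaches the last position, since at least the final letter of the word still lies to the right until the data block is completed (and by the very definition of $r_j$ and $t_j$, the data block fills precisely $l_j$ positions, so completion coincides with reaching length $n+1$); (iii) count: a word of length $|P|$ has at most $|P|-m+1\le n-m+1$ distinct length-$m$ subwords, which is strictly less than $q^m$ for all $n$ beyond a small threshold, and the threshold cases are handled by the observation that $\log_q n$ being small makes $K=4m+1$ exceed $n+1$, so no long duplication exists and the procedure is vacuous; (iv) conclude that the set of non-matching subwords of $P$ of length $m$ is nonempty, so a valid choice exists at each of the $r_j$ (plus the extra trailing) insertions.

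I also want to be careful about the case $t_j=0$: then the block $0^{t_j}$ is empty and the "final" non-matching subword is chosen for the prefix ending right after the $(r_j-1)$-th non-matching subword — the same counting bound applies verbatim, since removing the empty block does not change $|P|$. And when $r_j\le 1$ the loop adding "$r_j-2$ more" non-matching subwords is simply empty, which is consistent because in that regime $l_j$ is barely above $2m+1$ and there is nothing to prove beyond the first and last non-matching subwords; one should note $r_j\ge 1$ whenever $l_j\ge 3m+1$, and that for $m\ge 1$ the threshold $K=4m+1\ge 3m+1$ guarantees $r_j\ge 2$, so the description "$r_j-2$ more times" is meaningful. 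The main obstacle is item (ii): pinning down exactly why $|P|$ never reaches $n+1$ — equivalently, why there is always at least one position of the length-$(n+1)$ word lying strictly to the right of the point where the new non-matching subword ends. This follows from the arithmetic identity $l_j=2m+r_jm+1+t_j$, which says the data block occupies exactly as many positions as were freed by deleting the left duplicated part, so the word is "full" only at the very last symbol of the block; until then $P$ is a proper prefix, giving $|P|\le n$. Once that accounting is made precise, the counting argument in (iii)–(iv) is routine.
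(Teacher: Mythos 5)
Your proof is correct and takes essentially the same route as the paper: compare the $q^{\lceil\log_q n\rceil}\ge n$ candidate words of length $m=\lceil\log_q n\rceil$ against the at most $|P|-m+1$ subwords of that length occurring in the relevant prefix $P$, reducing to the worst case at the final insertion after $0^{t_j}$. One caution: your loose bound $|P|\le n$ yields only $q^m-(n-m+1)\ge m-1$ spare candidates, which vanishes when $m=1$, and your fallback for that case (that $K=4m+1$ would exceed $n+1$) fails for large alphabets (e.g.\ $q=n=16$, $K=5$, word length $17$); instead use the exact prefix length $n-2m$ at the last insertion, which you already derive from the identity $l_j=2m+r_jm+1+t_j$, to get at least $3m-1\ge 1$ non-matching candidates exactly as the paper does.
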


\begin{proof}
Note that the total number of $q$-ary words of length $\left\lceil\log_q n\right\rceil$ is $q^{\left\lceil\log_q n\right\rceil}\ge n$. Also, note that the number of subwords of length $\left\lceil\log_q n\right\rceil$ in an arbitrary word of length $m\ge\left\lceil\log_q n\right\rceil$ is $m-\left\lceil\log_q n\right\rceil+1$. Therefore, the number of possible non-matching subwords for a word of length $m$ is not less than $n-(m-\left\lceil\log_q n\right\rceil +1)$. It follows from this inequality that the number of non-matching subwords is smaller the longer the data word is. At the same time, with each adding operation at $E_2$, the length of the word increases. So it is enough to check the existence of a non-matching subword only at the end of the $E_2$ mapping: while searching for a non-matching subword after adding $0^{t_j}$. At this point, the length of the left part of the word is $n+1-(\left\lceil\log_q n\right\rceil+\left\lceil\log_q n\right\rceil+1)=n-2\cdot\left\lceil\log_q n\right\rceil$ (the entire length minus the length of the non-matching subword, the length of the data $l_j$ and a letter $1$), and the number of non-matching subwords is not less than $n-(n-2\cdot\left\lceil\log_q n\right\rceil-\left\lceil\log_q n\right\rceil +1)=3\cdot\left\lceil\log_q n\right\rceil-1 \ge 1$ (for $n\ge 2$). This means the existence of a non-matching subword with the required condition at any moment of $E_2$ mapping.
\end{proof}

\begin{lemma}\label{lem: encoding terminates}
For any $q$-ary word $x_0\in\mathbb{Z}_q^{n+1}$, there exists a non-negative $s\in\mathbb{Z}$ such that $y_s=E^s(x_0)$ does not contain long duplications, that is, the encoding mapping terminates. Moreover, the resulting word $y_s$ has the following form
$$y_s=(D^s(x_0), i_1^{(q)}, u_1^{(q)}, l_1^{(q)}, 1, \dots, 1, i_s^{(q)}, u_s^{(q)}, l_s^{(q)}, 1).$$
Here, $D^s(x_0)$ is the remaining part of the word $x_0$ after removing $s$ duplicated parts, and $u_j^{(q)}$ is part of the added data which contains $r_j$ non-matching subwords and a block of zeroes $0^{t_j}$.
\end{lemma}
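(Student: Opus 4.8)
The plan is to track a single nonnegative integer quantity attached to the current word and show it strictly decreases at each application of $E$, so the process must halt; the form of $y_s$ then follows by unwinding the definitions of $E_1$ and $E_2$. The natural candidate is the total number of symbols in the current word that lie outside the already-fixed data blocks — equivalently, the length of the ``active prefix'' $D^j(x_0)$ together with whatever trailing symbols have not yet been frozen into a completed data block. I would first argue that once a data block $(i_j^{(q)}, u_j^{(q)}, l_j^{(q)}, 1)$ has been written at step $j$, it is never touched again: every subsequent $E_1$ removes the leftmost long duplication, and I need to check that this leftmost duplication always lies entirely within the active prefix $D^j(x_0)$, never overlapping the suffix of data blocks. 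This is exactly the role of the non-matching subwords inserted by $E_2$: by Lemma \ref{lem: non-matching subword}, after each $E_2$ the data region is ``saturated'' with subwords of length $\lceil \log_q n\rceil$ that do not occur earlier, which I claim prevents any duplication of length $\ge K = 4\lceil\log_q n\rceil+1$ from starting inside or straddling the data region — a duplication of length $K$ forces a repeated subword of length $\lceil\log_q n\rceil$, and the non-matching property, applied at the granularity of the block structure, rules this out.

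Granting that, the argument is short. Each application of $E$ consists of $E_1$, which removes a duplicated part of length $l_j \ge K$ from the active prefix and appends a raw data block of length $2\lceil\log_q n\rceil+1 < K$, strictly shrinking the active prefix; followed by $E_2$, which only pads the newly appended block (inserting non-matching subwords and $0^{t_j}$ to bring the block length up to exactly $l_j$) and so restores the total length to $n+1$ without enlarging the active prefix. Hence $|D^{j}(x_0)|$ is a strictly decreasing sequence of nonnegative integers, so it stabilizes after finitely many steps; at the step where $E_1$ finds no long duplication in the active prefix, the process terminates, and since everything outside the active prefix is a concatenation of frozen blocks, $y_s$ has precisely the displayed form. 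I would also remark that the final active prefix $D^s(x_0)$ contains no long duplication by the termination condition, and the data region contains none by the non-matching-subword invariant, while no long duplication can straddle the boundary — this is what the earlier lemmas and the present termination argument together deliver, and it is what is needed downstream.

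The main obstacle, and the step I would spend the most care on, is the claim that the leftmost long duplication always sits inside the active prefix, i.e. that $E_1$ never disturbs a completed data block. One has to be careful that a long duplication could in principle begin in the active prefix and extend into the data region, or begin in $u_j^{(q)}$ and run across the block boundary marker $1$ into the next block. Here the precise placement of the non-matching subwords matters: each was chosen to be non-matching for the \emph{entire} word to its left (including the preceding frozen material), so any length-$\lceil\log_q n\rceil$ window inside the data region disagrees with every earlier window; combined with the fact that $K$ is large enough ($4\lceil\log_q n\rceil+1$, so a length-$K$ block contains enough disjoint length-$\lceil\log_q n\rceil$ windows to force a genuine repeat somewhere) one rules out the bad configurations. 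I expect this case analysis on where a putative long duplication can start and end — mirroring the three-case analysis in the proof of Lemma \ref{lem: words without duplications} — to be the technical heart, and the monotonicity/termination part to be essentially immediate once it is in place.
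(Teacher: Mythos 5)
Your overall architecture matches the paper's: induct on the number of applications of $E$, maintain an invariant that keeps the completed data blocks untouched, and conclude termination from the strict decrease of $|D^j(x_0)|$. However, the invariant you propose to establish is too strong, and the step you yourself flag as the technical heart would fail as you state it. You claim that the leftmost long duplication ``always lies entirely within the active prefix'' and that the non-matching subwords ``prevent any duplication \dots from starting inside or straddling the data region.'' A straddling duplication cannot be excluded this way: if the right copy of a duplication $vv$ overhangs into the data region by fewer than $2\lceil\log_q n\rceil$ symbols (say it covers only $i_j^{(q)}$), the overhang contains no complete non-matching subword, so nothing forbids it --- $i_j^{(q)}$ is just an integer written in base $q$ and may perfectly well reappear earlier in the word. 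What the paper proves, and what suffices, is the weaker invariant that the \emph{right} copy of a long duplication cannot lie \emph{entirely} inside the data region: the gaps between consecutive non-matching subwords there are at most $2\lceil\log_q n\rceil+1$, so any window of length $K=4\lceil\log_q n\rceil+1$ contained in the data region swallows a whole non-matching subword, which would then have to occur again in the left copy, a contradiction. From this one concludes only that the \emph{left} copy lies entirely in $D^{j-1}(x_0)$, and accordingly it is the left copy that gets deleted --- that is how the data blocks stay untouched even when the right copy straddles the boundary.

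A second point your sketch omits: after deleting the left copy of a duplication from $D^{j-1}(x_0)$, one must check that the previously inserted non-matching subwords are still non-matching for the shortened word, or the invariant does not propagate to step $j$. This holds because deleting one copy of a tandem repeat creates no new subwords of length $\lceil\log_q n\rceil$ (every window across the new junction already occurred across the old one), but it needs to be stated. With these two corrections --- weakening the invariant to the left copy and verifying preservation of non-matchingness under deletion --- your argument coincides with the paper's proof; the monotonicity and termination part is, as you say, immediate once the invariant is in place.
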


\begin{proof}
We prove by induction that for any non-negative $j\in\mathbb{Z}$ the word $y_j$ has the form 
$y_j=(D^j(x_0), i_1^{(q)}, u_1^{(q)}, l_1^{(q)}, 1, \dots, 1, i_j^{(q)}, u_j^{(q)}, l_j^{(q)}, 1),$ where $D^j(x_0)$ is the word $x_0$ with $j$ removed parts of the duplications and $u_j^{(q)}$ is the part of the added data with $r_j$ non-matching subwords and part $0^{t_j}$, i.e. $u_j^{(q)}=(u_{j,1},\dots, u_{j, r_j-1}, 0^{t_j}, u_{j,r_j})$. We also prove that in this word the right part of any duplication cannot lie entirely in part with the added data. Then we show how Lemma \ref{lem: encoding terminates} follows from this statement.

Before proceeding to the direct proof of  Lemma \ref{lem: encoding terminates}, we note that the following inequality holds:
\begin{align*}
r_j&=\left\lfloor\frac{l_j-2\cdot\left\lceil\log_qn\right\rceil-1}{\left\lceil\log_qn\right\rceil}\right\rfloor\\
&\ge\left\lfloor\frac{4\cdot\left\lceil\log_qn\right\rceil +1-2\cdot\left\lceil\log_qn\right\rceil-1}{\left\lceil\log_qn\right\rceil}\right\rfloor\\
&=\left\lfloor\frac{2\cdot\left\lceil\log_qn\right\rceil}{\left\lceil\log_qn\right\rceil}\right\rfloor=2,
\end{align*}
that is, $r_j\ge 2$.

\textit{Base of induction}: $j=1$. The fact that the word $y_1$ has the form $(D(x_0), i_1^{(q)}, u_1^{(q)}, l_1^{(q)}, 1)$ directly follows from the description of the mapping $E$. Let us prove that the right part of any duplication cannot lie entirely in part with the added data. Note that in our word the distance between $2$ adjacent non-matching subwords (from the end of the left to the beginning of the right) does not exceed $\left\lceil\log_qn\right\rceil$, since the largest distance between them is filled by $0^{t_1}$ (here we use that $r_1\ge 2$). Also, the distance from the leftmost non-matching subword to the beginning of the data block is $\left\lceil\log_qn\right\rceil$, and the distance from the rightmost non-matching subword to the end of the data block is $\left\lceil\log_qn\right\rceil+1$. Therefore, if the right part of some duplication lies in part with the added data, then it contains at least one non-matching subword because the length of this duplication part is not less than $4\cdot\left\lceil\log_qn\right\rceil+1$. From the definition of a non-matching subword, it follows that the same subword can not lie in the left part of this duplication, which contradicts the definition of duplication, therefore, such a case is impossible.

\textit{Induction step}: assume the statement to be true for some $j-1$. Now we shall prove it for $j$. By the induction hypothesis, $y_{j-1}$ equals $(D^{j-1}(x_0), i_1^{(q)}, u_1^{(q)}, l_1^{(q)}, 1, \dots, 1, i_{j-1}^{(q)}, u_{j-1}^{(q)}, l_{j-1}^{(q)}, 1).$
Since the left part of the duplication in this word lies entirely in part $D^{j-1}(x_0)$, we can assume that when deleting one part of the duplication, we delete the left part, and then we modify subword $D^{j-1}(x_0)$ into $D^j(x_0)$, but the part with the added data does not change. After adding the new data, our word has the form $y_j=(D^j(x_0), i_1^{(q)}, u_1^{(q)}, l_1^{(q)}, 1, \dots, 1, i_j^{(q)}, u_j^{(q)}, l_j^{(q)}, 1)$, which is what we need. It remains to prove that the right part of any duplication cannot lie entirely in part with the added data. For this word, the maximum distance between two adjacent non-matching subwords is $2\cdot\left\lceil\log_qn\right\rceil+1$, and it is filled with a subword of the form $(l_j^{(q)},1,i_j^{(q)})$ (here we also use that $r_j\ge 2$). But the length of the right part of any duplication is not less than $4\cdot\left\lceil\log_qn\right\rceil+1$. Hence the right part of the duplication, which lies in the added data block, must contain at least one non-matching subword. Similarly to the induction base, this case is impossible.

For the induction step, it is also worth noting that a non-matching subword remains non-matching after deleting duplications in the main part of the word $D^j(x_0)$. This is true since when deleting one of the parts of the duplication, we can get new possible non-matching subwords from the intersection of the parts of the duplication, but at the same time, no non-matching subwords cease to be them, since new subwords of length $\left\lceil\log_qn\right\rceil$ do not appear in the word.

Now we show how Lemma \ref{lem: encoding terminates} follows from the proved statement. At any step of encoding mapping $E$, the left part of the duplication lies in the part of the word $D^j(x_0)$, which means that we can assume that we delete the left part of the duplication at the next encoding step and the length of the word $D^j(x_0)$ decreases with increasing $j$. But at the same time, the length of this word cannot be less than zero, which means that our algorithm finishes its work, that is there exists a non-negative $s\in\mathbb{Z}$ such that $y_s=E^s(x_0)$ does not contain long duplications. The form of the word $y_s$ directly follows from the proven statement with $j=s$. Lemma \ref{lem: encoding terminates} is completely proved.
\end{proof}

\subsection{Duplication correction}

Let $c$ be the word from the code $C_{n+1}^f$ from Theorem \ref{th: code construction}. Let $y=\tau_{i,l}(c)$ be a duplication error in the word $c$. We can restore the word $c$ by searching for a duplication of length $l$ in the word $y$ and removing one duplication part, which follows from Lemma \ref{lem: words without duplications}. Note that for our construction, the correction of duplication can be done by a simple operation, but in general, this operation may not be so simple.

\subsection{Codeword decoding}

In this subsection, we describe the algorithm for decoding a codeword into an information word. From Lemma \ref{lem: encoding terminates} it follows that any codeword $y$ has the form $(D^s(x_0), i_1^{(q)}, u_1^{(q)}, l_1^{(q)}, 1, \dots, 1, i_s^{(q)}, u_s^{(q)}, l_s^{(q)}, 1)=y_s$ for some word $x_0$ and non-negative integer $s$. Decoding begins by reading the word from right to left. Primarily, we read the first letter from the end, and if it is equal to $1$ (when $s\neq 0$), then there is some data about duplication in the word $D^{s-1}(x_0)$. Then we read $\left\lceil\log_qn\right\rceil$ letters, which contain information about the length of the last duplication, that is, $l_s$. At encoding, we use the equality $l_s=2\cdot\left\lceil\log_qn\right\rceil+r_s \cdot\left\lceil\log_qn\right\rceil +1+t_s$, and from here we can find the length of the part of the data block, which is between $i_s^{(q)}$ and $l_s^{(q)}$ -- the number $r_s\cdot\left\lceil\log_qn\right\rceil+t_s$, which equals $l_s-2\cdot\left\lceil\log_qn\right\rceil-1$. Let us move to the left by this number of letters and then read the next $\left\lceil\log_qn\right\rceil$ letters that give us information about $i_s$. We read exactly $l_s$ letters and received information about $l_s$ and $i_s$, that is, we can reconstruct the corresponding duplication in the word $D^{s-1}(x_0)$. We delete the data block we read from the word and restore the duplication that was deleted at the encoding mapping. At the same time, the length of the word does not change and is equal to $n+1$, since the length of the deleted part is equal to the length of the duplication in accordance with the encoding algorithm.
We repeat such an operation while the last letter of the word equals $1$. When the last letter becomes equal to $0$, we delete this letter and obtain the original information word  $x$.

\subsection{Correctness of codeword decoding}

Correctness of decoding algorithm follows from Lemma \ref{lem: encoding terminates}, because after each decoding step our word has the form $y_j=(D^j(x_0), i_1^{(q)}, u_1^{(q)}, l_1^{(q)}, 1, \dots, 1, i_j^{(q)}, u_j^{(q)}, l_j^{(q)}, 1).$ It is clear from the decoding algorithm that we simply read one block of added data and restore one duplication. At the same time, when restoring the last duplication, we consider the last data block and the next letter is $0$. This means that an information word $x$ is written to the left.

\subsection{Code cardinality and redundancy}

\begin{lemma}\label{lem: cardinality of the code}
The cardinality of a code 
$$C_{n+1}^f=\{y\in \mathbb{Z}_q^{n+1}|\ \exists x\in \mathbb{Z}_q^n, y=f(x)\},$$
where $f$ is the encoding mapping $\mathbb{Z}_q^n\rightarrow \mathbb{Z}_q^{n+1}$ from Theorem \ref{th: code construction}, equals $q^n$ and the redundancy of this code equals $1.$
\end{lemma}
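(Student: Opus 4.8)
The plan is to prove injectivity of $f$ first, and then deduce the cardinality and redundancy. Since $f:\mathbb{Z}_q^n\to\mathbb{Z}_q^{n+1}$, if $f$ is injective then $|C_{n+1}^f|=|\mathbb{Z}_q^n|=q^n$, and the redundancy is by definition $\log_q|\mathbb{Z}_q^{n+1}| - \log_q|C_{n+1}^f| = (n+1)-n = 1$. So the entire content of the lemma reduces to showing that distinct information words $x\neq x'$ produce distinct codewords $f(x)\neq f(x')$.

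For injectivity I would invoke the decoding procedure described in the preceding subsections together with Lemma~\ref{lem: encoding terminates}. The key observation is that Lemma~\ref{lem: encoding terminates} guarantees every codeword $y=f(x)$ has the canonical form $y_s=(D^s(x_0), i_1^{(q)}, u_1^{(q)}, l_1^{(q)}, 1, \dots, 1, i_s^{(q)}, u_s^{(q)}, l_s^{(q)}, 1)$, and the decoding algorithm, reading right-to-left, deterministically recovers $x_0$ (hence $x$) from this form: each trailing $1$ signals a data block, the subsequent $\left\lceil\log_q n\right\rceil$ symbols encode $l_s$, from $l_s$ one computes the offset $l_s - 2\cdot\left\lceil\log_q n\right\rceil - 1$ to the block encoding $i_s$, and then the removed duplication is reinstated; the process halts exactly when the trailing symbol is $0$, which is stripped to give $x$. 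Thus I would define a map $g:C_{n+1}^f\to\mathbb{Z}_q^n$ by this decoding procedure and argue $g(f(x))=x$ for all $x$, which immediately yields that $f$ is injective. The correctness-of-decoding subsection already establishes $g\circ f = \mathrm{id}$, so this step is essentially a citation.

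I would then conclude: injectivity of $f$ gives $|C_{n+1}^f|=q^n$; and since codewords live in $\mathbb{Z}_q^{n+1}$ which has $q^{n+1}$ elements, the redundancy is $\log_q(q^{n+1}) - \log_q(q^n) = 1$. I would state this computation explicitly but briefly.

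The only potential subtlety — and the step I would be most careful about — is ensuring the decoding map is genuinely well-defined on all of $C_{n+1}^f$, i.e. that the canonical decomposition in Lemma~\ref{lem: encoding terminates} is \emph{unique}: one must be sure that a codeword cannot be parsed in two different ways as $(D^s(x_0),\text{data blocks})$. This follows because the parsing is driven entirely by reading fixed-length fields from the right (a sentinel $1$, then exactly $\left\lceil\log_q n\right\rceil$ symbols for $l_s$, then a computed number of symbols, then $\left\lceil\log_q n\right\rceil$ symbols for $i_s$), so the decomposition is forced at each stage; but since the encoding inserts these $1$ sentinels into a $q$-ary string where $1$ may also occur inside $D^s(x_0)$ or inside the $i_j^{(q)}, l_j^{(q)}, u_j^{(q)}$ fields, I would emphasize that the decoder never needs to \emph{search} for a $1$ — it only inspects the single current rightmost symbol after each block removal — so no ambiguity arises. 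With that remark in place, the lemma follows.
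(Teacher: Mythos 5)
Your proposal is correct and follows essentially the same route as the paper: the paper's proof simply asserts that $f$ is injective (which is justified by the decoding procedure and Lemma~\ref{lem: encoding terminates}, exactly as you argue) and then computes $|C_{n+1}^f|=q^n$ and $\eta=(n+1)-n=1$. Your additional remark on the well-definedness of the right-to-left parsing is a reasonable elaboration of what the paper leaves implicit, but it does not change the argument.
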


\begin{proof}
$f$ is an injective mapping $\mathbb{Z}_q^n\rightarrow \mathbb{Z}_q^{n+1}$, hence the cardinality of the code is equal to the cardinality of $\mathbb{Z}_q^n$, that is, $
|C_{n+1}^f|=|\mathbb{Z}_q^n|=q^n.$ Then we are able to determine the \textit{redundancy of the code}: $\eta=(n+1)-\log_q|C_{n+1}^f|=(n+1)-\log_qq^n=n+1-n=1.$
\end{proof}

\subsection{Complexity of encoding and decoding}

Before direct proof of the lemma about the complexity of encoding and decoding, let's denote the number of $q$-ary words of length $n$ with at least one duplication of length $l\ge K$ as $CBW$ (count of bad words) and prove the following lemma.

\begin{lemma}\label{lem: CBW}
The number of $q$-ary words of length $n$ with at least one duplication of length $l\ge K$ can be estimated as
$$CBW\le n\cdot q^{n}\cdot q^{1-K}.$$
\end{lemma}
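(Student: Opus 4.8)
The plan is to bound the number of ``bad'' words by a union bound over all possible positions and lengths of the leftmost (or any) long duplication, together with the observation that a word containing a duplication of a given length $l$ at a given position $i$ is completely determined by fewer than $n$ free symbols. Concretely, a word $x=(uvw)$ with $\tau_{i,l}$-structure, i.e.\ with $x_{i+1+[l]}=x_{i+l+1+[l]}$ being two equal consecutive blocks of length $l$, is specified by choosing $u$ ($i$ symbols), $v$ ($l$ symbols) and $w$ ($n-2l-i$ symbols — note the total free length is $n-l$), so there are at most $q^{n-l}$ such words for each fixed $(i,l)$.

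First I would fix the parameters: the number of admissible starting positions $i$ is at most $n$, and the number of admissible lengths $l$ ranges over $K\le l\le \lfloor n/2\rfloor$, which is also at most $n$ choices, but crucially every such $l$ satisfies $l\ge K$, so $q^{n-l}\le q^{n-K}$. Hence summing over all pairs $(i,l)$ gives
\begin{align*}
CBW &\le \sum_{i}\sum_{l\ge K} q^{\,n-l} \le n\cdot q^{n-K}\cdot\sum_{l\ge K} q^{-(l-K)} \\
&\le n\cdot q^{n-K}\cdot\frac{q}{q-1} \le n\cdot q^{n}\cdot q^{1-K},
\end{align*}
where in the last step I use $\tfrac{q}{q-1}\le q^{1}$ for $q\ge 2$ (indeed $\tfrac{q}{q-1}\le 2\le q$ when $q\ge2$), actually one wants the cleaner bound $\tfrac{q}{q-1}\le 2$ and then absorb; the cleanest route is to just bound the geometric sum $\sum_{l\ge K}q^{-(l-K)}=\tfrac{1}{1-1/q}=\tfrac{q}{q-1}\le q$. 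Combining the factor $n$ from the positions, the factor $q^{n-K}$, and the factor $q$ from the geometric sum yields exactly $n\cdot q^{n}\cdot q^{1-K}$.

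The only subtlety — and the step I expect to need the most care — is the counting claim that a word of length $n$ with a duplication of length $l$ at position $i$ has at most $q^{n-l}$ realizations: one must make sure not to double-count across different $(i,l)$ pairs (the union bound handles that automatically, at the cost of only an inequality), and one must check the degenerate ranges (e.g.\ $i$ close to $0$ or close to $n-2l$, and $l$ close to $n/2$) so that $q^{n-l}$ is a valid upper bound on the count in all cases. Since $n-l\ge n-\lfloor n/2\rfloor\ge 0$ the exponent is nonnegative and the bound $q^{n-l}$ is meaningful throughout. A brief remark that the estimate is deliberately crude (it ignores that the leftmost duplication imposes extra structure on $u$) but suffices for the $O(n^2/\log_q n)$ encoding-complexity argument completes the proof.
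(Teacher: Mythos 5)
Your proposal is correct and follows essentially the same route as the paper: a union bound over the position and length of a long duplication, with at most $q^{n-l}$ words for each fixed pair (since the second copy of the duplicated block is determined by the first), followed by bounding the number of positions by $n$ and summing the geometric series in $l$. The paper merely organizes the sum as $\sum_{l=K}^{n/2} q^{n-l}(n-2l+1)$ before making the same estimates, so there is no substantive difference.
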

\begin{proof}
Let's consider the following chain of equalities and inequalities:
\begin{align*}
CBW&\le \sum\limits_{l=K}^{n/2} q^{n-l}\cdot (n-2l+1)\\
&\le\sum\limits_{l=K}^{n/2} \frac{q^{n}\cdot n}{q^l}\\
&=q^{n}\cdot n\cdot (q^{1-K}-q^{-\frac{n}{2}})\\
&\le n\cdot q^{n}\cdot q^{1-K}.
\end{align*}

The first inequality is the sum over all possible duplication lengths from the products of the number of corresponding words ($q$-ary words in which $l$ letters corresponding to the second part of the duplication are fixed) by the number of possible options for the start of the duplication. The second inequality is obtained by estimating the values $n-2l+1$ in the sum by their maximum value $n$, and for the final result, we use the geometric progression formula and inequality that $q^{\alpha}> 0$ for any $\alpha\in\mathbb{R}$.
\end{proof}

\begin{lemma}\label{lem: operations complexity}
The encoding complexity of the code $C_{n+1}^f$ is $O(\frac{n^2}{\log_q n})$ and the decoding complexity is $O(n)$.
\end{lemma}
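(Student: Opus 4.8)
The plan is to analyze the encoding and decoding algorithms described above step by step, bounding the number of elementary operations at each stage. For encoding, the key observation is that each application of the mapping $E$ removes at least one long duplication of length $l_j \ge K$, while appending a data block of length exactly $l_j$; the part $D^j(x_0)$ therefore strictly decreases in length by $l_j - (\text{data block length written into the left-hand side}) $, but more simply, since the left part of every duplication lies inside $D^j(x_0)$ (by Lemma \ref{lem: encoding terminates}) and we delete it, the length of $D^j(x_0)$ drops by $l_j \ge K = 4\lceil\log_q n\rceil + 1$ at each step. Hence the number $s$ of iterations satisfies $s \le (n+1)/K = O(n/\log_q n)$. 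It remains to bound the cost of a single iteration of $E$.

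Next I would bound the cost of one iteration. A single application of $E_1$ requires finding the leftmost long duplication in a word of length $n+1$: this is a string search that can be done in $O(n \cdot \lceil\log_q n\rceil)$ time naively, or $O(n)$ with more careful bookkeeping — in any case $O(n\log_q n)$ suffices. Writing the $q$-ary representations of $i_j$ and $l_j$ and the marker $1$ costs $O(\log_q n)$. For $E_2$, we must find $r_j - 1$ non-matching subwords, and each such search, by the counting argument in the proof of Lemma \ref{lem: non-matching subword}, amounts to scanning a word of length at most $n+1$ for all of its length-$\lceil\log_q n\rceil$ subwords and picking one of the $\ge 1$ missing words; this costs $O(n\log_q n)$ per subword, and $r_j = O(l_j/\log_q n) = O(n/\log_q n)$, so a single $E_2$ step costs $O(n^2)$ in the crudest bound. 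Summing over $s = O(n/\log_q n)$ iterations would give $O(n^3/\log_q n)$, which is too weak, so I expect the main obstacle to be a sharper accounting: one must observe that $\sum_j r_j \le \sum_j l_j / \lceil\log_q n\rceil \le (n+1)/\lceil\log_q n\rceil = O(n/\log_q n)$, because the total length of all removed duplicated parts is at most $n+1$ (each removal shrinks $D^j(x_0)$, which never goes negative). Thus the total number of non-matching-subword searches across the whole encoding is $O(n/\log_q n)$, each costing $O(n\log_q n)$, for a total of $O(n^2)$ from $E_2$; and the $s = O(n/\log_q n)$ duplication searches in $E_1$ cost $O(n\log_q n)$ each, totalling $O(n^2)$. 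Including the cost of maintaining the word (shifts of length $O(n)$ per step, $O(n/\log_q n)$ steps) gives $O(n^2/\log_q n)$ for that part; combining, the dominant term is $O(n^2)$ — here I would double-check the constant-factor hidden in the non-matching-subword search, since if it can be implemented in $O(n)$ rather than $O(n\log_q n)$ (e.g. by hashing all length-$\lceil\log_q n\rceil$ windows), the bound improves to the claimed $O(n^2/\log_q n)$.

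For decoding, the analysis is easier: by Lemma \ref{lem: encoding terminates} the codeword has the block structure $y_s = (D^s(x_0), i_1^{(q)}, u_1^{(q)}, l_1^{(q)}, 1, \dots, i_s^{(q)}, u_s^{(q)}, l_s^{(q)}, 1)$, and the decoder reads from right to left: each block is read in $O(l_j)$ time (read the trailing $1$, read $\lceil\log_q n\rceil$ symbols for $l_j^{(q)}$, skip $l_j - 2\lceil\log_q n\rceil - 1$ symbols, read $\lceil\log_q n\rceil$ symbols for $i_j^{(q)}$), then one duplication of length $l_j$ is reinserted at position $i_j$, which is a shift of $O(n)$ symbols — but since $\sum_j l_j \le n+1$, the number of blocks is again $s = O(n/\log_q n)$, and the shift cost dominates each step at $O(n)$, giving $O(n^2/\log_q n)$; however the reading cost totals $\sum_j O(l_j) = O(n)$. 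To get the claimed $O(n)$ decoding bound, I would argue that the reinsertions can be organized so the word is built up in place without repeated $O(n)$ shifts — e.g. by precomputing all $(i_j, l_j)$ in one right-to-left pass of cost $O(n)$ and then assembling $x_0$ in a single left-to-right pass — reducing the whole decoding to $O(n)$. I expect the delicate point here, as on the encoding side, to be making precise that the data blocks partition the suffix of the codeword and that their total length is $O(n)$, which is exactly the content of Lemma \ref{lem: encoding terminates} together with the bound $\sum_j l_j \le n+1$.
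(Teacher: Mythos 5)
Your decomposition matches the paper's (bound the number of iterations by $s=O(n/\log_q n)$, bound the per-iteration cost, and amortize using $\sum_j l_j\le n+1$ and $\sum_j r_j=O(n/\log_q n)$), but your own accounting stops at $O(n^2)$ for encoding, and the places you flag for a ``double-check'' are exactly where the claimed bound is won or lost. Concretely: (i) you price each search for the leftmost long duplication at $O(n\log_q n)$; since this search is performed $\Theta(n/\log_q n)$ times, that alone already gives $O(n^2)$, exceeding the claim. The paper invokes the linear-time algorithm of Kolpakov and Kucherov for finding all maximal repetitions, so each duplication search costs $O(n)$ and the $s$-fold search contributes $O(n^2/\log_q n)$ --- this is in fact the dominant term of the entire encoding, so it cannot be left at $O(n\log_q n)$ per call. (ii) You price each non-matching-subword query at $O(n\log_q n)$ (or $O(n)$ with hashing) by rescanning the word. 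The paper instead maintains, persistently across all iterations, a $q$-ary trie with counters over all length-$\lceil\log_q n\rceil$ windows: it is built once in $O(n)$, updated in $O(\log_q n)$ per inserted or deleted letter, and a missing leaf is found by repeatedly descending to a child whose counter is below the appropriate threshold, in $O(\log_q n)$ per query; all queries together then cost only $O(n\log_q n)$. Without these two ingredients (or substitutes of the same strength) the argument does not reach $O(n^2/\log_q n)$.

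For decoding, your right-to-left reading of the data blocks in total time $\sum_j O(l_j)=O(n)$ is fine, but the reinsertion step is the crux, and your proposed fix (precompute all $(i_j,l_j)$, then assemble $x_0$ in a single pass) is underspecified: the restorations must be performed in the order $j=s,s-1,\dots,1$, each one duplicates a subword of the \emph{current} word, and the subword duplicated at step $j$ may include material restored at step $j+1$, so the final word is not a simple interleaving computable in one sweep from the raw list of pairs $(i_j,l_j)$. The paper resolves this by storing the word in a balanced (AVL) search tree ordered by position, so that reading the $l_j$ symbols to be duplicated and splicing the copy back in costs $O(l_j+\log n)$ via split and join; summing over $j$ and using $\sum_j l_j\le n+1$ gives the claimed $O(n)$. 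Either adopt such a data structure or give a concrete in-place assembly argument; as written, the $O(n)$ decoding bound is not established.
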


\begin{proof}

\textit{Complexity of encoding}

We start with the complexity of encoding. Let us estimate the complexity of one mapping $E$, as well as the number of mapping applications. For the complexity of one mapping $E$, we identify three main operations, which contribute to the complexity we estimate:

1. Duplication search.

2. Search for non-matching subwords.

3. Deletions and insertions in the word.

\textit{1. The complexity of duplication search}

An algorithm from the paper~\cite{kolpakov1999finding} allows finding duplication with complexity $O(n)$. 

\textit{2. The complexity of searching non-matching subwords}

A. Creating a $q$-ary tree

To search for non-matching subwords, we create a $q$-ary tree of $\left\lceil\log_q n\right\rceil$ height for a word $x_0$ of length $n+1$. It is built according to the following rule: $q$-ary sequences of length $\left\lceil\log_q n\right\rceil$ are written in its leaves (starting from $(0,\dots ,0,0)$, $(0,\dots ,0,1)$ and ending with $(q-1,\dots ,q-1,q-1$). Moving to the leftmost child in the tree means that the corresponding letter in all leaves of the subtree is $0$, moving to the next child means that the corresponding letter in all leaves of the subtree is $1$, and moving to the rightmost child means that this letter is $q-1$. Note, that the number of nodes in this tree does not exceed $2\cdot n$. An example of such a tree for $q=2$ and $n=4$:

\begin{center}
  \includegraphics[width=0.55\linewidth, page=1]{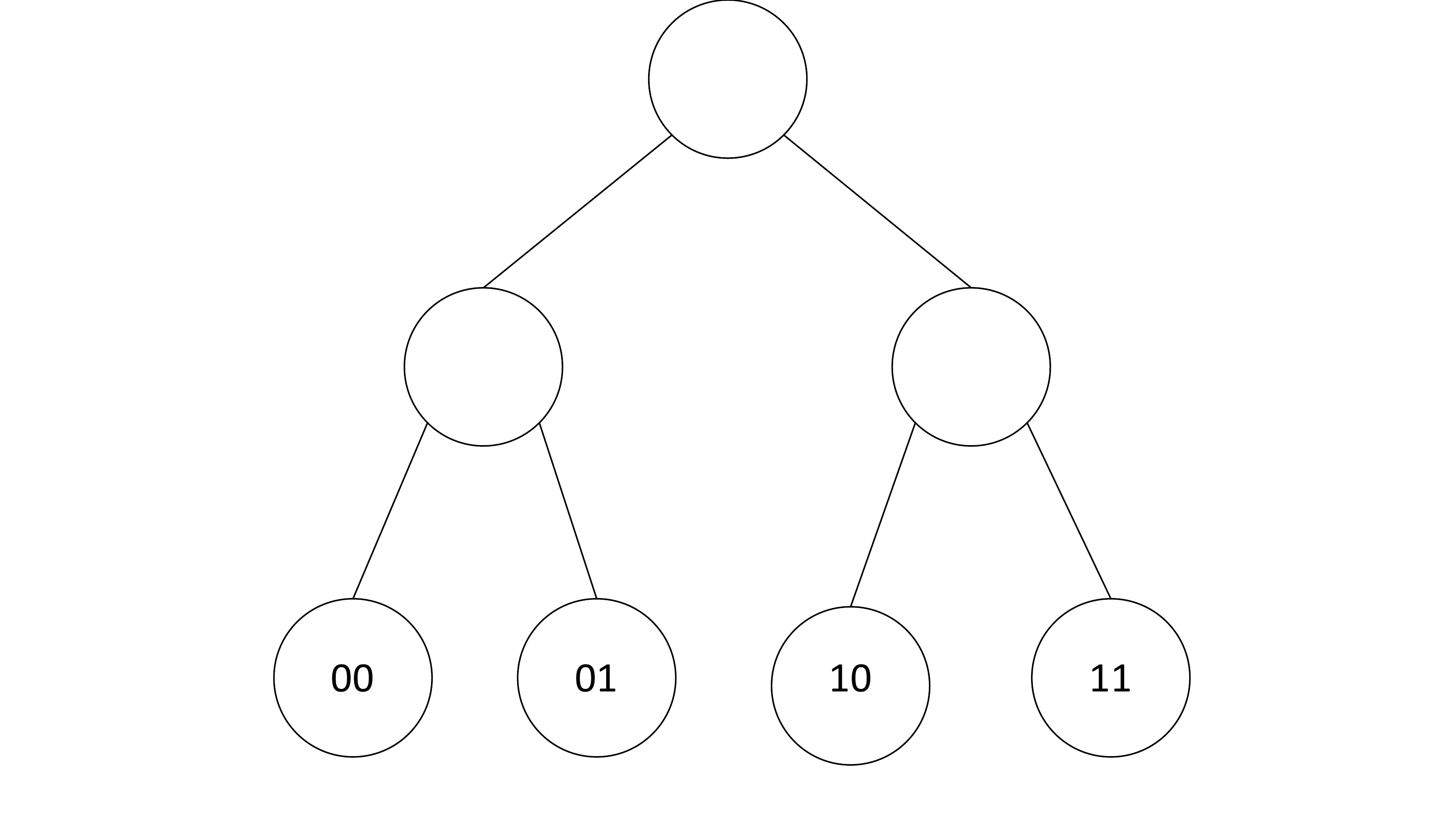}  
\end{center}

We should also add information about the word $x_0$ to our tree. To do this, add counters to the leaves equal to the number of corresponding subwords of length $\left\lceil\log_q n\right\rceil$ at word $x$. We also add counters to the internal nodes equal to the sum of the counters in its child nodes. For example, the corresponding tree for $q=2$ and the word $(0,0,1,1)$ has the following form:
\begin{center}
  \includegraphics[width=0.55\linewidth, page=2]{images/1.pdf}  
\end{center}

We create such a tree only once, before the first application of the mapping $E$. The complexity of creating such a tree is $O(n)$. Next, we will change it with transformations of our word. The operations of searching, inserting (increasing the corresponding counter), and deleting (decreasing the corresponding counter) have complexity $O(\log_q n)$, like in a $q$-ary search tree.

B. Changing the tree due to the removal of $j^{th}$ duplication and adding the data block.

This step should be done one time at each mapping. Tree changes due to the removal of the part of the duplication and the addition of the $i_j^{(q)}$ block. We remove $l_j$ letters and  decrease the counters for $l_j+\left\lceil\log_q n\right\rceil-2$ subwords. Then we add $\left\lceil\log_q n\right\rceil$ letters to the end of the word, that is, we increase the counter for $\left\lceil\log_q n\right\rceil - 1$ subwords. In total, we do $l_j +2\cdot\left\lceil\log_q n\right\rceil -3$ operations with complexity $O(\log_q n)$. At the same time $l_j\ge 4\cdot\left\lceil\log_q n\right\rceil +1$. Then the total complexity of the step is $O(l_j\cdot\log_q n)$.

C. Search for a non-matching subword in the tree

This step should be done $r_j$ times. We are looking for a leaf in the tree whose counter equals 0. We start from the root node and go to the child node whose counter is less than $\frac{q^l}{q}=q^{l-1}$, where $l=\left\lceil\log_q n\right\rceil$ is the length of the words stored in the tree (that is, the parent node of this subtree has a leaf with counter equals 0). Then we repeat the process in the same way and move to the subtree whose counter is less than $q^{l-m-1}$, where $m$ is the number of steps down the tree (being at the root of the tree $m=0$). It is necessary to prove the correctness of this algorithm.

We prove the existence of the corresponding subtree by induction by the number of steps: \textit{the base of induction} -- the total number of subwords in the tree (the sum of all the counters in the leaves) is less than $q^l$, which follows from the existence of a non-matching subword. This means that the counter of one of the child nodes is strictly less than $q^{l-1}$. \textit{Induction step} ($m+1$ step down the tree) -- we are in a subtree in which the total number of subwords (counter of this node) is less than $q^{l-m-1}$, but the counter of the node is equal to the sum of the counters in the child nodes. Then one of the counters of the child nodes is guaranteed to be less than $\frac{q^{l-m-1}}{q}=q^{l-(m+1)-1}$, which is what we need. Such a binary search has a complexity $O(\log_q n)$.

Example of a searching algorithm for $q=2$ and the word $(0,0,1,1)$:

\begin{center}
  \includegraphics[width=0.55\linewidth, page=3]{images/1.pdf}  
\end{center}

D. Changing the tree due to the addition of non-matching subwords and a block with $l_j$ and 1 to the end of the word

This step should be done $r_j + 3$ times since the addition of non-matching subwords occurs in $r_j$ quantity, as well as one changing after adding $0^{t_j}$, one changing after adding $l_j^{(q)}$ and one changing after adding $1$ to the end of the word. The complexity of insertions and deletions (increasing and decreasing counters) in the tree under consideration is $O(\log_q n)$. Each time we change the tree, we do no more than $\left\lceil\log_q n\right\rceil-1$ insertions (the lengths of $l_j^{(q)}$, $0^{t_j}$ and non-matching subwords do not exceed $\left\lceil\log_q n\right\rceil$ and we add these subwords to the end). Then the complexity of one such change does not exceed $O(\log_q n \cdot \log_q n)=O(\log_q^2 n)$.

Note that steps C and D should be done $O(r_j)$ times. Moreover, the definition of $r_j$ implies the equality $r_j=O(\frac{l_j}{\log_q n})$. The total complexity of searching non-matching subwords and tree changings for one mapping is $O(l_j\cdot\log_q n)+O(r_j\cdot\log_q n)+O(r_j\cdot\log_q^2 n)=O(l_j\cdot\log_q n)+O(\frac{l_j}{\log_q n}\cdot\log_q n)+O(\frac{l_j}{\log_q n}\cdot\log_q^2 n)=O(l_j\cdot\log_q n)$.

\textit{3. The complexity of deletions and insertions in the main word}

We store our word using AVL-tree structure (self-balancing binary search tree, see\cite{adelsonvelskii1963algorithm, foster1965information}). Each node of our tree stores three elements: the first element is the word letter, the second element is the height of its subtree and the third element is the number of nodes in the left subtree (including the parent node). The letters in the tree are stored according to the following rule. Let $N$ be an arbitrary node of our tree. Then all the nodes of its left subtree $N_{left}$ contain letters standing in the word to the left of the letter from node $N$, and all the nodes of the right subtree $N_{right}$ contain letters standing in the word to the right of the letter from node $N$. Creating such a tree has a complexity of $O(n)$ (for example, using the "middle element search" method), where $n$ is the length of the word and should be done only $1$ time. Moreover, from the tree we can return back to the "string" form of the word with complexity $O(n)$ by in-order tree traversal. Example of such a tree for $q=2$ and the word $(1,0,1,1,0,1)$:

\begin{center}
    \includegraphics[width=0.55\linewidth, page=4]{images/1.pdf} 
\end{center}

Let's describe and evaluate the complexity of the operation of searching for a letter in such a tree by the number of its position in the word. Let $i$ be the position of the letter in the word and let $j$ be the number of nodes in the left subtree of the root of the original tree (including the root of the original tree). If $i<j$, we move to the left subtree. If $i> j$, we move to the right subtree and decrease $i$ by $j$. Next, we do the same operation in the subtree to which we moved. If for some comparison it turned out that $i=j$, it means that we have found the required letter. The complexity of such an operation is determined by the height of the tree, that is, equal to $O(\log_2 n)$, where $n$ is the length of the word.

Deleting a letter from the tree is performed by searching for this letter by its number, removing it from the tree, further rebalancing the tree, if necessary, and updating the parameters stored in the nodes. The complexity of searching for a letter is $O(\log_2 n)$. The complexity of removing a node and further rebalancing in AVL-tree is $O(1)$. Finally, the complexity of updating parameters is $O(\log_2 n),$ because we should update parameters only for nodes that we passed through when searching for the removed node. Then the total complexity of deleting a letter is $O(\log_2 n)+O(1)+O(\log_2 n)=O(\log_2 n)$.

Inserting a letter can be done using an operation similar to searching for a letter in a tree. The only difference is that with $i=j$ we move to the left subtree, as with $i<j$. We continue descending the tree until we have nodes to compare the value of $i$ with. If at some point during our descent there are no nodes left, then we add a leaf with an inserted letter to the last encountered node $N_1$. The leaf is added from the side to which we had to move as a result of comparing the value of $i$ and the number of nodes in the left subtree of the node $N_1$. Then we have to rebalance a tree (with complexity $O(1)$), if necessary, and update the parameters stored in the tree nodes (with complexity $O(\log_2 n)$). The total complexity of inserting a letter equals $O(\log_2 n)$, as for deleting a letter.

Split (split an AVL-tree into two smaller AVL-trees such that all nodes in the left tree are smaller than $k$ and all values in the right tree are greater than $k$ for fixed parameter $k$) and join (join two AVL-trees such that all nodes in the left tree are smaller than those in the right tree into one balanced AVL-tree) operations are also known for AVL-trees. Their complexity is $O(\log_2 n)$.

Let us estimate the number of subword deletions and insertions from the main word at one encoding step. We do $1$ duplication deletion, $2$ insertions with duplication information ($i_j^{(q)}$ and $l_j^{(q)}$), $1$ insertion of a letter $1$ at the end of a word, $1$ insertion of subword $0^{t_j}$ and $r_j$ insertions of non-matching subwords. Thus, the number of insertions and deletions is at most $r_j+5=O(\frac{l_j}{\log_q n})$. Each inserted or deleted subword has a length not greater than $\left\lceil\log_q n\right\rceil$. Then the total complexity of insertions and deletions can be estimated as $O(\frac{l_j}{\log_q n}\cdot \log_q n\cdot{\log_2 n})=O(l_j\cdot\log_2 n)$.

\textit{4. Number of mapping applications}

In one encoding step, we reduce the length of the main word by at least $4\cdot \left\lceil\log_q n\right\rceil+1$ letters, since $K=4\cdot\left\lceil\log_q n\right\rceil+1$. Therefore, the number of encoding steps can be estimated as $O(\frac{n}{\log_q n})$.

Then we denote the overall encoding complexity $T(n)$ and estimate it, given that the number of encoding steps is $s=O(\frac{n}{\log_q n})$: 
\begin{align*}
T(n)&\le \sum\limits_{j=1}^s O(n)+\sum\limits_{j=1}^s O(l_j\cdot\log_q n)\\
&+O(n)+\sum\limits_{j=1}^sO(l_j\cdot\log_2 n)\\
&=O(n\cdot s)+\sum\limits_{j=1}^s O(l_j\cdot\log_q n)\\
&\le O\left(\frac{n^2}{\log_q n}\right)+O(n\cdot\log_q n)\\
&=O\left(\frac{n^2}{\log_q n}\right).
\end{align*}
In the presented transformation of the expression, we also take into account that the sum of the lengths of duplications does not exceed the length of the word, that is $\sum\limits_{j=1}^s l_j\le n$.
Note that the total encoding complexity is determined by $s$-fold search for duplications in a word. In all other steps, we pass information between the mapping steps, which reduces the overall complexity of these operations. Perhaps the search for duplications can also be simplified by passing information between encoding steps.

\textit{Complexity of decoding}

First, we check the last letter of the word. If it is equal to $0$, then we delete it and decoding is completed. If this letter is equal to 1, then we read $\left\lceil\log_q n\right\rceil +1$ letters from right to left, starting from the end of the word. From these letters, we find out what $l_j$ is equal to, and then we can find the number $r_j\cdot\left\lceil\log_q n\right\rceil+t_j$. Then we skip $r_j\cdot\left\lceil\log_q n\right\rceil+t_j$ letters, read the following $\left\lceil\log_q n\right\rceil$ letters, and find $i_j$. In total, we read (and delete) $l_j$ letters and we understand what duplication we need to restore since we know the beginning of the duplication $i_j$ and its length $l_j$. Then we need to restore this duplication in the word $D^j(x_0)$. 

Creating a tree from the "string" form of the word has complexity $O(n)$. Reading and deleting $l_j$ letters from the end can be implemented by splitting the tree by the $l_j$ letter from the end of the word and reading the right tree with $l_j$ nodes. The complexity of such an operation is $O(\log_2 n + l_j)$. Reading all the nodes from $i_j+1$ to $i_j+l_j$ can be implemented by splitting the tree by the node $i_j+1$, then splitting the tree by $i_j+l_j+1$ node, reading subword from the middle tree (from $i_j+1$ to $i_j+l_j$ letters of the word), and joining these three trees into one. Total complexity of this operation is $O(\log_2 n + \log_2 n + l_j + \log_2 n + \log_2 n) = O(\log_2 n + l_j)$. For inserting the reading subword we can create a described tree with this subword, split the main tree at the node with $i_j+l_j+1$ letter of the main word and join the left tree of splitting, the tree with inserting subword and the right tree of splitting. Complexity of this operation is $O(l_j + \log_2 n + \log_2 n + \log_2 n) = O(l_j + \log_2 n)$. Given that $l_j\ge K=4\cdot\left\lceil\log_q n\right\rceil +1$, the complexity of one decoding step is $O(l_j)$. Summing up the resulting expression by $j$ and adding the complexity of creating a tree, we can estimate the total decoding complexity (denote it $T_D(n)$) as:
$$T_D(n)\le O(n) + \sum\limits_{j=1}^s O(l_j)\le O(n) + O(n) = O(n).$$

\end{proof}

Lemmas \ref{lem: non-matching subword}-\ref{lem: operations complexity} imply  Theorem \ref{th: code construction}, the main result of this paper.

Note that in the average case the encoding and decoding complexity is $O(n)$. Let's denote the fraction of words of length $n$ with at least one long duplication by $p$ (the probability that an arbitrary binary word of length $n$ contains a long duplication). Then the fraction of words without duplications equals $1-p$. Also, let's denote the average complexity of encoding as $T^A(n)$ and the average complexity of decoding as $T^A_D(n)$. From Lemma \ref{lem: CBW} it follows that $p=\frac{CBW}{q^n}\le \frac{n\cdot q^n\cdot q^{1-K}}{q^n} = n^{-3}$ for $K=4\cdot\left\lceil\log_q n\right\rceil+1$. For the quantity $1-p$, we can use simple inequality $1-p\le 1$. 

If there are no long duplications in the original word after adding a letter $0$ to the end, then the encoding complexity is $O(n)$, because we add a letter $0$ to the end of the word with complexity $O(1)$ and check new word for the absence of duplications with complexity $O(n)$. If duplication is found, then the complexity of encoding is $O(\frac{n^2}{\log_q n})$. Then
\begin{align*}
T^A(n) &= O\left(n\cdot (1-p) + \frac{n^2}{\log_q n}\cdot p\right)\\
&\le O\left(n\cdot 1 + \frac{n^2}{\log_q n}\cdot \frac{1}{n^3}\right)\\
&= O(n).
\end{align*}

When decoding, we first look at the last letter of the word, so even in the best case the complexity of decoding is $O(n)$, as well as in the worst case, so the average complexity of decoding equals $O(n)$ too.

\section{Duplication codes with redundancy 1} \label{sec: codes with redundancy 1}

In this section, we compare the code $C^f_{n+1}$ with the converse bound from~\cite{lenz2018bounds} and with another (non-efficient) code construction with redundancy $1$.

\subsection{Non-efficient code construction}
Using Corollary~\ref{cor: 1}, we construct a duplication code as the set of all possible words without long duplications.
\begin{theorem}\label{th: code construction 2}
    Define the code  $C_{n+1}^0$ as
    $$
    C_{n+1}^0=\{x\in\mathbb{Z}_q^{n+1}:   x_{i+[l]}\neq x_{i+l+[l]} \ \forall i\ge 1,l\ge K \},
    $$
     i.e. the code consists of all words that do not contain duplications of length at least $K$. The code $C_{n+1}^0$  can correct a single duplication of length at least $K$. Its cardinality can be estimated as $|C_{n+1}^0|\ge q^{n+1}\cdot (1 - n\cdot q^{1-K})$.
\end{theorem}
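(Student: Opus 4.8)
The plan is to treat the two assertions of the theorem separately, since they are of a completely different nature: the error-correction property is an immediate consequence of what has already been established, while the cardinality bound is a routine first-moment (union bound) estimate.

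For the correction claim, I would simply invoke Corollary~\ref{cor: 1}. By definition every word of $C_{n+1}^0$ has the same length $n+1$ and, again by definition, contains no subword of the form $x_{i+[l]}x_{i+l+[l]}$ with $l\ge K$, i.e.\ no duplication of length at least $K$. Corollary~\ref{cor: 1} then says exactly that such a code corrects a single duplication of length at least $K$, so nothing more is required here.

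For the cardinality bound, I would bound the number of \emph{bad} words of length $n+1$ (those containing at least one duplication of length $l\ge K$) by essentially repeating the argument of Lemma~\ref{lem: CBW} with $n$ replaced by $n+1$. A word $y\in\mathbb{Z}_q^{n+1}$ contains a duplication of length $l$ starting at position $i$ iff $y_{i+[l]}=y_{i+l+[l]}$, which forces $i+2l-1\le n+1$; hence for each fixed $l$ there are at most $n+2-2l\le n$ admissible positions $i$, and for each pair $(i,l)$ the $l$ symbols of the right copy are determined by the left copy, leaving at most $q^{n+1-l}$ such words. Summing over $i$ and over $l=K,\dots,\lfloor (n+1)/2\rfloor$, and using $\sum_{l\ge K}q^{-l}=\frac{q^{1-K}}{q-1}\le q^{1-K}$, the number of bad words is at most $n\,q^{n+1}\,q^{1-K}$. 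Subtracting this from the total count $q^{n+1}$ gives $|C_{n+1}^0|\ge q^{n+1}\bigl(1-n\,q^{1-K}\bigr)$, as claimed.

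I do not expect a genuine obstacle; the only points needing a little care are (i) getting the range of starting positions right, so that the count $n+2-2l$ is correctly bounded by $n$ for $l\ge K$, and (ii) observing that the union bound automatically handles words possessing several, possibly overlapping, long duplications, so no inclusion–exclusion correction is needed. It is also worth remarking that the bound is meaningful precisely when $n\,q^{1-K}<1$, which for $K=4\lceil\log_q n\rceil+1$ holds comfortably (indeed $n\,q^{1-K}\le n^{-3}$), so $C_{n+1}^0$ is nonempty for all sufficiently large $n$ and in fact has redundancy less than $1$.
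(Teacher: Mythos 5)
Your proposal is correct and follows essentially the same route as the paper: the correction property is obtained by invoking Corollary~\ref{cor: 1}, and the cardinality bound is the union-bound count of Lemma~\ref{lem: CBW} applied to words of length $n+1$, subtracted from $q^{n+1}$. Your explicit check that the number of starting positions $n+2-2l$ is still at most $n$ (so the factor stays $n$ rather than $n+1$) and that $\sum_{l\ge K}q^{-l}\le q^{1-K}$ is slightly more careful than the paper's own write-up, but the argument is the same.
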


\begin{proof}
The fact that $C_{n+1}^0$ is a code correcting a single long duplication directly follows from Lemma \ref{lem: non-matching subword} and Corollary \ref{cor: 1}.

From Lemma \ref{lem: CBW} we know that number of $q$-ary words of length $n+1$ with at least one long duplication ($CBW$) can be estimated as $CBW\le n\cdot q^{n+1} \cdot q^{1-K}$. Then the number of words without duplications
$|C_{n+1}^0|$ can be estimated as
\begin{align*}
|C_{n+1}^0|&=q^{n+1}-CBW\\
&\ge q^{n+1}-q^{n+1}\cdot n\cdot q^{1-K}\\
&=q^{n+1}\cdot (1-n\cdot q^{1-K}).
\end{align*}
Theorem \ref{th: code construction 2} is completely proved.
\end{proof}

Consider the case when $K=\left\lceil\log_q n\right\rceil+\phi(n)$, where $\phi(n)\to \infty$ as $n\to\infty$. Let $\psi(n) = \phi(n)-1$, $K=\left\lceil\log_q n\right\rceil+\psi(n)+1$. From Theorem \ref{th: code construction 2} it follows that 
$$|C_{n+1}^0|\ge q^{n+1}\cdot (1-n\cdot q^{1-K})\ge q^{n+1}\cdot (1- q^{-\psi(n)}).$$ 
Therefore, the redundancy $\eta$ can be estimated as
$$\eta\le n+1-(n+1)-\log \left(1-\frac{1}{q^{\psi(n)}}\right)=\log\frac{q^{\psi(n)}}{q^{\psi(n)}-1}.$$ 
Letting $n$ tend to infinity, we get $\lim\limits_{n\to\infty}\eta=0$. So we get that for $K=\left\lceil\log_q n\right\rceil+\phi(n)$, where $\phi(n)\to \infty$ as $n\to\infty$, there exists a code $C_{n+1}^0$ correcting a single long duplication with redundancy not more than 1. The advantage of code $C_{n+1}^f$ over code $C_{n+1}^0$ is the fact that code $C_{n+1}^f$ allows for the efficient encoding and decoding described in this paper.
\subsection{Converse bound}
Recall the result of the work~\cite{lenz2018bounds}. For optimal code $C\subset \mathbb{Z}_2^n$ correcting a single duplication of length $l$, the cardinality of the code can be estimated as $|C|\le q^{n+l+1}\cdot\frac{1}{n\cdot (q-1)}$. Then redundancy $\eta\ge n-(n+l+1 - \log_q n - \log_q (q-1)) = \log_q n + \log_q (q-1) - l - 1 \ge \log_q n-l-1$. For $l=\log_q n-c$ and $c>2$ we get that redundancy $\eta\ge c-1>1$, this means that it is impossible to construct such code with redundancy equals $1$. From this fact, it follows that we have an order-optimal length of a correctable duplication in our constructions.

\section{Conclusion} \label{sec: conclusion}

In this paper, we have discussed the problem of constructing binary codes capable of correcting a single long duplication for the $q$-ary alphabet. As the main contribution of the paper, we have presented the efficiently encodable code of length $n+1$ and redundancy $1$ correcting a duplication of length at least $4\cdot\lceil \log_q n \rceil +1$. The complexity of the encoding operation is $O(\frac{n^2}{\log_q n})$ and the complexity of the decoding is $O(n)$. In addition, we show the existence of (non-efficient) codes of length $n+1$ and redundancy $1$ correcting a duplication of length at least $\left\lceil\log_q n\right\rceil + \phi(n)$ with $\phi(n)\to\infty$ as $n\to\infty$. These results together with the known converse bound imply that the length of a correctable duplication in our constructions is order-optimal. 

Since it is already known how to deal with both a short duplication error and a long duplication error, it is natural to ask how to combine the existing approaches and build an efficient code correcting a single tandem duplication of arbitrary length. We leave this task as an open challenging problem.

\section*{Acknowledgment}
Ilya Vorobyev was supported by BMBF-NEWCOM, grant number 16KIS1005.

\newpage

\bibliographystyle{IEEEtran}
\bibliography{ref}

\end{document}